\newcommand{\be}{\begin{eqnarray} \begin{aligned}}
\newcommand{\ee}{\end{aligned} \end{eqnarray} }
\newcommand{\benn}{\begin{eqnarray*} \begin{aligned}}
\newcommand{\eenn}{\end{aligned} \end{eqnarray*} }
\newcommand{\bc}{\begin{center}}
\newcommand{\ec}{\end{center}}
\newcommand{\id}{\mathbb{I}}
\newcommand{\tr}{\mathop{\mathrm{tr}}\nolimits}
\newtheorem{theorem}{Theorem}[section]
\newtheorem{lemma}[theorem]{Lemma}
\newtheorem{corollary}[theorem]{Corollary}
\newcommand{\hin}{\mathcal{H}_{\rm in}}
\newcommand{\argmax}{\mathop{\mathrm{argmax}}\nolimits}
\newcommand{\bop}{\mathcal{B}}
\def\id{\mathbb{I}}
\def\01{\{0,1\}}
\newcommand{\floor}[1]{\lfloor{#1}\rfloor}
\newcommand{\eps}{\varepsilon}
\newcommand{\ket}[1]{|#1\rangle}
\newcommand{\bra}[1]{\langle#1|}
\newcommand{\outp}[2]{|#1\rangle\langle#2|}
\newcommand{\proj}[1]{|{#1}\rangle\langle{#1}|}
\newcommand{\eproj}{\Pi_{\delta}^{\eps}}
\newcommand{\henc}{\mathcal{H}_{\rm enc}}
\newcommand{\anc}{\mathcal{H}_{\rm anc}}
\newcommand{\setX}{\mathcal{X}}
\newcommand{\hmin}{\ensuremath{ {\rm H}}_{\infty}}
\newcommand{\HS}{\ensuremath{ {\rm H}}}
\newcommand{\init}{\ket{ {\rm x_{\max}}}}
\newcommand{\pinit}{\proj{ {\rm x_{\max}}}}
\newcommand{\xmax}{ {x_{\max}} }
\newcommand{\xmin}{x_{\min}}
\newcommand{\pmax}{p_{\xmax}}
\newcommand{\pmin}{p_{\xmin}}
\newcommand{\rhomin}{\rho_{\xmin}}
\newcommand{\rhomax}{\rho_{\xmax}}
\newcommand{\Wmin}{W_{\xmin}}
\newcommand{\Pmin}{P_{\xmin}}
\begin{document}

\title{A time-dependent Tsirelson's bound from limits on the rate of information gain in quantum systems}

\author{Andrew C. Doherty}
\affiliation{Centre for Engineered Quantum Systems, School of Physics, University of Sydney, Sydney, Australia}
\author{Stephanie Wehner}
\affiliation{Centre for Quantum Technologies, National University of Singapore, 2 Science Drive 3, 117543 Singapore}

\date{\today}

\begin{abstract}
	We consider the problem of distinguishing between a set of arbitrary quantum states in a setting in which the time available to perform the measurement is limited. We provide simple upper bounds on how well we can perform state discrimination in a given time as a function of 
	 either the average energy or the range of energies available during the measurement. We exhibit a specific strategy that nearly attains this bound. 
	Finally, we consider several applications of our result. First, we obtain a time-dependent Tsirelson's bound that limits the extent of the Bell inequality violation that can be in principle be demonstrated in a given time $t$.  Second, we obtain a Margolus-Levitin type bound when considering the special case of distinguishing orthogonal pure states.
\end{abstract}

\maketitle

\section{Introduction}

Entropic measures tell us how much information a quantum register $E$ contains about some classical register $X$ \emph{in principle}.
But just how quickly does this information become available to us? In this little note, we derive bounds on the amount of information available
after a given time $t$. As expected, our bounds depend on the resources we have available in the form of the available energy.

Throughout this paper, we will choose to measure information in terms of the {\it min-entropy}, 
which is the relevant quantity when we consider single-shot experiments and quantum cryptography. As we will
explain in detail below, this measure is directly related~\cite{renato:operational} to the probability of success in state discrimination~\cite{helstrom:detection,holevo,belavkin:optimal,yuen:maxState,surveyDiscr}. As a result, we focus on bounding the probability of success 
in distinguishing states $\{\rho_x\}_{x \in \setX}$ where we are given $\rho_x$ with probability $p_x$. Let $P_{\rm guess}(X|E)_{H,t}$ denote this success probability
after time $t$ when using a particular Hamiltonian $H$ in the measurement process. 
After providing a more careful discussion of the measurement process, we show the following results.

\subsection{Results}

\noindent
\textit{A bound for two states:\ } We first consider the case of only two input states $\rho_0,\rho_1$, for which it is easy to compute the optimal success probability
if we have unlimited time (or resources) available~\cite{helstrom:detection}. We first provide a general bound in terms of the spectrum of the Hamiltonian (Corollary~\ref{cor:traceDist}). 
For the special case of of two equiprobable states ($p_0 = p_1 = 1/2$), this bound simply reads
\begin{align}\label{eq:twoStateSummary}
	P_{\rm guess}(X|E)_{H,t} \leq \frac{1}{2} + \frac{\gamma t \|H\|_\infty D(\rho_0,\rho_1)}{2\hbar}\ ,
\end{align}
where $D(\rho_0,\rho_1)$ is the trace distance between the two states, and $\gamma$ is a small constant. This bound is directly related to our ability
to distinguish two inputs states given an unlimited amount of time, where the best measurements gives us~\cite{helstrom:detection}
\begin{align}
	P_{\rm guess}(X|E) = \frac{1}{2} + \frac{D(\rho_0,\rho_1)}{2}\ .
\end{align}
We proceed to show that our bound is nearly tight up to a constant factor (Theorem~\ref{thm:attaining}) by providing an explicit measurement strategy. 
Finally, we prove a bound in terms of the average energies of the input states (Theorem~\ref{thm:avgEnergyTwoStates}). However, this bound does not compare
as easily to the case of unlimited time. 

\smallskip
\noindent
\textit{A bound for many input states:\ } When considering the case of an arbtirary number of input states $\rho_0,\ldots,\rho_{N-1}$ it is difficult to compute
the maximum success probability even in the case of unlimited time. In particular, no general closed form expression is known -- only for the case of single qubit encodings does
there exist a way to construct the optimal measurements geometrically~\cite{barbara:qubit}. 
In general, we can only approximate the optimal 
measurements numerically~\cite{yuen:maxState, eldar:sdpDetector,eldar:sdp,werner:iterate,jezek:iterate,jezek:iterate2,tyson:oldIterate,tyson:newIterate}, 
or resort to bounds on the success probability~\cite{belavkin:optimal,belavkin:radio,mochon:pgm,wootters:pgm,tyson:pgm,tyson:estimates,deepthi:pi}.
As such, it becomes harder to relate $P_{\rm guess}(X|E)_{H,t}$ to case of unlimited time. We hence provide a general bound in terms of the average energies alone. 
In particular, we show (Theorem~\ref{thm:manyStates}) that
\begin{align}
	P_{\rm guess}(X|E) \leq p_{\xmax} + \sum_{x=0}^{N-1} p_x \tr\left(H\rho_x\right)\ ,
\end{align}
where $\xmax$ is the smallest $x \in \{0,\ldots,N-1\}$ such that $p_{\xmax} \geq p_x$ for all $x$.

\smallskip
\noindent
\textit{Applications:\ } Finally, we discuss two applications of our bound. 
The first is to the study of Bell inequalities~\cite{bell}. Typically, we care about determining the maximum quantum violation of such inequalities. 
In contrast, we ask what is the maximum violation that can be achieved in a fixed amount of time. When considering such inequalities as games between two players Alice and Bob (see Section~\ref{sec:game}), the ''amount'' of quantum violation is determined by the probability $p_{\rm win}$ that the players win the game maximized over all states and measurements. For the CHSH inequality~\cite{chsh}, we have that classically
\begin{align}
	p_{\rm win} \leq \frac{3}{4}\ 
\end{align}
for any strategy of Alice and Bob.
However in quantum mechanics there exists a strategy that achieves
\begin{align}
	p_{\rm win} = \frac{1}{2} + \frac{1}{2\sqrt{2}}\ ,
\end{align}
which is optimal~\cite{tsirel:original}. Here, we show (Corollary~\ref{cor:tsirel}) that if we demand answers from Alice and Bob after time $t$
\begin{align}
	p_{\rm win} \leq \frac{3}{4} + \frac{\gamma t \|H\|_\infty}{\hbar\sqrt{2}}\ ,
\end{align}
where $H$ is Bob's Hamiltonian involved in the measurement process, and $\gamma$ is a small constant. 
We will also see that to achieve Tsirelson's bound, Alice and Bob need time at least
\begin{align}
	t \geq \frac{\hbar}{\gamma \|H\|_\infty}\ .
\end{align}
Our bounds tell us that there does indeed a fundamental time that is needed to establish non-local correlations of a 
certain strength. We will discuss these bounds in detail in Section~\ref{sec:tsirel}.

As a second application, we use our bound to obtain a form of the Margolus-Levitin theorem~\cite{mlTheorem} which
provides us with a lower bound on how much time it takes to transform a pure state into an orthogonal state. Since the Margolus-Levitin theorem provides a bound on the
speed of evolution, it clearly provides a bound on the minimum amount of time that is required to obtain the optimal (time unlimited) success probability for state discrimination.
Yet, note that we are interested in bounding $P_{\rm guess}(X|E)_{H,t}$ even for shorter periods of time. We will discuss the relation of our work and the Margolus-Levitin theorem
in detail in Section~\ref{sec:ML}.

\subsection{Related work}

Next to the Margolus-Levitin theorem~\cite{mlTheorem}, our work is related to 
several bounds~\cite{levitin1,pendry} on how fast information can be transmitted in principle
given energy constraints (see~\cite{bekenstein:survey} for a survey of results). These bounds generally consider
the von Neumann entropy as a measure of information and are concerned with determining the capacity for sending 
information as a function of energy. That is, they consider how fast we could convey information in the best possible way. In contrast, 
we consider the case of arbitrary encodings $\rho_x$, which may not be optimal to transmit classical information.
In fact, even in the case of unlimited time the probability that we can reconstruct $x$ from $\rho_x$ could be 
very small. Our setting also differs in the sense that we focus solely on extracting classical information into a
classical register in a sense that we will make precise below.

Our work is also related to several previous papers~\cite{lazyStates,gogolin:mthesis,hayashi:purity} that study the rate of change in 
entropies of a system that is in contact with an environment. Again, our work is a somewhat different flavor since we are
interested in extracting classical information, and our bounds furthermore involve average energies, rather than 
the largest energy $\|H\|_\infty$ of the (interaction) Hamiltonian $H$ alone. 

\section{Gaining classical information}

\subsection{Quantifying information}

Let us now consider more formally what we mean by gaining classical information encoded in a quantum system. 
Imagine that there is some finite set $\setX$ of possible classical symbols to be encoded.
For any symbol $x \in \setX$, we thereby use $\rho_x \in \bop(\henc)$ to denote its encoding into a quantum state
on the system $\henc$. We also refer to $\henc$ as the \emph{encoding space}. Our a priori ignorance about the classical
information $x$ is captured by the probability distribution $p_x$ according to which the encoding space is prepared
in the state $\rho_x$. 

Throughout, we quantify how much information we have about $x$ given access to the encoding space $\henc$ in terms of
the min-entropy~\cite{renato:operational}
\begin{align}
	\hmin(X|E) := - \log P_{\rm guess}(X|E)\ ,
\end{align}
where 
\begin{align}
	P_{\rm guess}(X|E) := \sup_{\substack{\forall x M_x \geq 0\\ \sum_{x \in \setX} M_x = \id}} \sum_{x \in \setX} p_x \tr\left(M_x \rho_x\right)\ ,
\end{align}
is the probability that we guess $x$, maximized over all possible measurements on the encoding space.
Finding the optimal measurement is known as state discrimination and can be done using semidefinite programming~\cite{yuen:maxState,eldar:sdpDetector}.
The min-entropy accurately measures information in a cryptographic setting~\cite{renato:diss}, and for single shot experiments. This is in contrast to the 
von Neumann entropy which is concerned with the asymptotic case of a large number of identical experiments. 

The min-entropy and the von Neumann entropy 
can be arbitrarily different, as is easily seen by considering the example where the encoding is trivial, that is, $\rho_x = \rho_{x'}$ for all $x$ and $x'$.
The strategy that maximizes the guessing probability $P_{\rm guess}(X|E)$ is then simply given by outputting the most likely symbol, i.e., $\hmin(X|E) = 
- \log \max_x p_x$~\footnote{In analogy to the von Neumann entropy, $\hmin(X|E) = \hmin(X)$ since $\hmin(X) := - \log \max_x p_x$.}, 
and the conditional von Neumann entropy obeys $\HS(X|E) = \HS(X) = - \sum_x p_x \log p_x$. Consider now $\Sigma = \{0,1\}^n$ to be the set of bitstrings of length $n$
and suppose the all '0' string occurs with probability $p_{0^n} = 1/2$, and with probability $1/2$ any of the remaining strings occurs with equal probability.
Clearly, we have $\hmin(X|E) = 1$, whereas $\HS(X) \approx n/2$. That is, the von Neumann entropy can be very large, even if there is one symbol that
occurs with extremely high probability. We will remark on the rate of information extraction from a quantum system in terms of the von Neumann entropy later on, but focus on the single shot case given by the min-entropy, 
or equivalently the probability of error in state discrimination.

\subsection{Producing a classical output}

To determine how quickly we can acquire classical information, we first need to specify what it means to output classical information from a measurement.
Here, we model this process with the help of an additional `classical' ancilla system $\anc$ that contains the output. A classical system is associated with a fixed basis, which without
loss of generality we take to be the computational basis. Preparation and measurement of a classical system can only be done in this basis, which intuitively corresponds
to the idea of storing classical information: 
The ancilla can be prepared in any state of the fixed basis, and is subsequently measured in this basis after time $t$.
The information contained in this register captures
the notion of a classical probability distribution over the basis elements. 

We model the process of state discrimination as follows. The problem is to discriminate between $N$ states $\rho_x$ on the encoding space $\henc$, where $N$ is the number of possible classical symbols. At the beginning of the experiment the ancilla system is initialized to 
the symbol occurring with the largest probabily $\init$ where
\begin{align}
x_{\max} := \argmax_x p_x\ .
\end{align}
This initial condition captures the distinguisher's apriori knowledge: recall without access to the quantum register $\hmin(X|E) = - \log p_{\xmax}$. 
If the there are multiple classical symbols with the same value $\pmax$, we take the smallest one in lexicographic order.
We will discuss the choice of initial state in detail below.
The ancilla system has total dimension $d_{\anc} = N$ and the other directions correspond to the classical symbols $x$. The experimenter implements a unitary $U$ on $\henc \otimes \anc$ during a specified time $t$. At this point the ancilla system is passed to a referee who will decide whether information has been gathered successfully by measuring $\anc$ in the computational basis, using measurement operators
\begin{align}
	P_x := \proj{x}\ ,
\end{align}
where the subscript $x$ denotes the corresponding classical output. 
Hence the success probability of correctly identifying the state $\rho_x$ using this procedure when the ancilla was initially in the state
$\init$ is given by
\begin{align}\label{eq:succProb}
	\tr\left( (\id_{\henc} \otimes P_x) U (\rho_x \otimes \pinit|)U^\dagger\right)\ .
\end{align}
See Figure~\ref{fig:ancillaUse} for a schematic depiction of this process.
Note that  the ancilla is measured by the referee at no time cost. This is a natural assumption in our setting where we imagine that the final information is extracted by a referee who is not limited by any energy constraints.  
Such a referee naturally arises in, for example, the setting of Bell inequalities which we consider later. 
We will from now on assume that measurements producing classical outcomes
are always performed this way.

\begin{figure}[h]
	\begin{center}
		\includegraphics[scale=1]{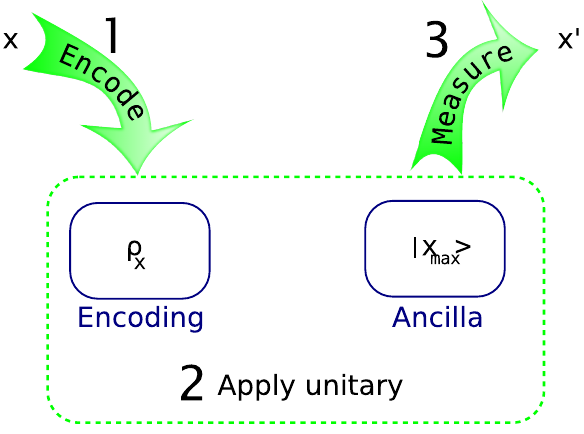}
	\caption{Our protocol for distinguishing quantum states in finite time. First, the encoding register is placed into an encoding $\rho_x$ of the classical symbol $x$ chosen with probability $p_x$. 
	The ancilla is intialized in the state $\init$.
	Second, we can perform
	a unitary interaction $U = \exp(-i Ht/\hbar)$ for time $t$ between the encoding and the ancilla register. Finally, the ancilla register is measured by the 
	referee in the computational basis to determine a guess $x'$ for $x$. If $x'=x$, then we successfully recovered the classical information. In the setting of Bell inequalities considered later on, the ancilla register is simply the message returned to the referee.}
	\label{fig:ancillaUse}
	\end{center}
\end{figure}

To bound how much min-entropy we have after time $t$, our goal is to place bounds on the success probability 
in terms of the unitary
\begin{align}
	U = \exp\left(-\frac{iHt}{\hbar}\right)\ ,
\end{align}
that is, in terms of the interaction Hamiltonian 
\begin{align}
	H = \sum_n E_n \proj{E_n}
\end{align}
and the time $t$. Throughout, we will assume that $H \geq 0$ and that the lowest energy level is in fact $E_0 = 0$. 
Any other Hamiltonian differs from such an $H$ by a term proportional to the identity, which does not contribute 
to the speed of information gain.
We explictly chose not to use the common convention $\hbar=1$ to make it easier to draw 
comparisons to the Margolus-Levitin theorem~\cite{mlTheorem} later on.

Before turning to our actual bounds, let us first introduce some additional notation which we will refer to throughout the paper. 
We will use 
\begin{align}
	\tilde{\rho}_x := \rho_x \otimes \pinit\ ,
\end{align}
to denote the combined state consisting of the input state $\rho_x$ on the encoding space, and the initial state of the 
ancilla $\pinit$.
We also write
\begin{align}
	R := U - I = \sum_n (\exp(-i E_n t/\hbar) - 1) \proj{E_n}\ .
\end{align}
Furthermore, it will be convenient to rewrite the success probability~\eqref{eq:succProb} in terms of measurement operators
\begin{align}
	\label{eq:measurementOperators}
	M_x := U^\dagger(\id \otimes P_x)U = \id \otimes P_x + W_x\ ,
\end{align}
as $\tr(M_x \tilde{\rho}_x)$, where
\begin{align}\label{eq:Wdef}
	W_x = W_{x}^{1} + W_{x}^{2}\ ,
\end{align}
and
\begin{align}
	W_{x}^{1} &:= (\id \otimes P_x)R + R^\dagger (\id \otimes P_x)\ ,\\
	W_{x}^{2} &:= R^\dagger (\id \otimes P_x) R\ .
\end{align}
The average success probability for a particular Hamiltonian $H$ and time $t$ can now be written as
\begin{align}
	P_{\rm guess}(X|E)_{H,t} := \sum_{x \in \setX} p_x \tr\left(M_x\tilde{\rho}_x\right)\ .
\end{align}

\section{Time vs. information gain}

We are now ready to derive our bounds. For simplicity, we will outline how this can be done for the case of two equiprobable states, and merely state
our general result. Precise statements as well as a detailed derivation can be found in the appendix. 

\subsection{An upper bound to $P_{\rm guess}(X|E)$}

We now first derive an upper bound to the guessing probability. For the case of two equiprobable states (i.e., $N= 2$ and $p_x = 1/2$ for all $x \in \setX$,
such bounds are easy to obtain when we allow unlimited time (or energy). In particular, it is well known that in this case the success probability is given
by~\cite{helstrom:detection}
\begin{align}\label{eq:distStandard}
	P_{\rm guess}(X|E) := \frac{1}{2} + \frac{D(\rho_0,\rho_1)}{2}\ ,
\end{align}
where $D(\rho_0,\rho_1) = \frac{1}{2} \|\rho_0 - \rho_1\|_1$ is the trace distance of the two states. 
Let us now consider what happens in our time limited scenario for a particular interaction Hamiltonian $H$.
First of all, recall that for two equiprobable states, the ancilla is initialized to the smallest value
\begin{align}
\init = \ket{0}\ .
\end{align}
For two states, 
the success probability $P_{\rm succ}$ averaged over the choice of input state, using the measurement given by operators $M_1$ and $M_0 = \id - M_1$
from~\eqref{eq:measurementOperators}, can now be expressed as
\begin{align}
	&P_{\rm succ}(X|E)_{H,t} \\
	&= \frac{1}{2}\left[ \tr\left(M_0 \tilde{\rho}_0\right) + \tr\left(M_1 \tilde{\rho}_1\right)\right]\nonumber\\
	&=\frac{1}{2}\left[1 + \tr\left(M_1(\tilde{\rho}_1 - \tilde{\rho}_0)\right)\right]\\
	&=\frac{1}{2}\left[1 + \tr\left(\rho_1 - \rho_0\right)\tr\left(P_1 \pinit\right) + \nonumber \right.\\
	&\qquad \left.\tr\left(W_1(\tilde{\rho}_1 - \tilde{\rho}_0)\right)\right]\\
	&=\frac{1}{2} + \frac{\tr\left(W_1(\tilde{\rho}_1 - \tilde{\rho}_0)\right)}{2}\ ,\label{eq:finalPSUCC}
\end{align}
where 
the fourth equality follows immediately from the fact that $P_1 \pinit = \pinit P_1 = 0$.
Let us now upper bound the term involving $W_1$. Again using that $P_1 \pinit =  \pinit P_1 = 0$, we have
\begin{align}
	\tr\left(W_{1}^{1}(\tilde{\rho}_1 - \tilde{\rho}_0)\right) = 0\ .
\end{align}
Define $\tilde{A} := \tilde{\rho}_1 - \tilde{\rho}_0$, and consider its diagonalization $\tilde{A} = \sum_j \lambda_j \proj{u_j}$.  
Let $\tilde{A}^+ := \sum_{j,\lambda_j \geq 0} \lambda_j \proj{u_j}$ and $\tilde{A}^- := \tilde{A} - \tilde{A}^+$.
Using the fact that $R \cdot R^\dagger$ is a positive map~\cite{bathia:posBook} and $0 \leq \id \otimes P_x \leq \id$, we can now bound the
term 
\begin{align}
	\tr\left(W_{1}^{2} \tilde{A}\right) &\leq \tr\left(R \tilde{A}^+ R^\dagger\right)\\
	&\leq 2 \sum_n (1 - \cos(t E_n/\hbar)) \bra{E_n} \tilde{A}^+ \ket{E_n}. 
\end{align}

Substituting back into our original bound (\ref{eq:finalPSUCC}) gives us
\begin{equation}
P_{\rm succ}(X|E)_{H,t}\leq \frac{1}{2} +\sum_n (1 - \cos(t E_n/\hbar)) \bra{E_n} \tilde{A}^+ \ket{E_n}\ . \label{eq:protoBound}
\end{equation}
This is the basic inequality that we can use, along with some restriction on the allowed energies $E_n$, to bound the success probability for state discrimination in time $t$. In the rest of the paper we will apply this in two main settings, bounded maximum energy, and bounded average energy.

\subsubsection{A bound in terms of the maximum energy}
From~\eqref{eq:protoBound}, we can immediately obtain a bound on the success probability for state discrimination in terms of the maximum energy $\|H\|_\infty$ of the coupling Hamiltonian $H$. ($\|H\|_\infty$ is just the largest eigenvalue of $H$.) This bound is attractive since it is simple to derive and has the appealing feature that it involves the 
trace distance between the two states, and is thus directly related to the probability that we distinguish the two states given an unlimited 
amount of time.
However, there are many systems of physical interest where the maximum energy of system states is effectively unbounded. Even though we may without loss of generality assume that the spectrum is bounded for a particular set of input states (see appendix), 
this bound is nevertheless quite unsatisfying in these situations since it can be very weak. For this reason, we use the fundamental inequality~\eqref{eq:protoBound} in the next section to derive a bound on the success probability that depends only on the average energy.

Note that since $\tr(\tilde{A}^+) = D(\tilde{\rho}_0,\tilde{\rho}_1) = D(\rho_0,\rho_1)$
we immediately obtain that the success probability obeys
\begin{align}
	&P_{\rm succ}(X|E)_{H,t} \\
	&\leq \frac{1}{2} + (1 - \cos(t C_{\max}/\hbar)) D(\rho_0,\rho_1)\ ,\nonumber
\end{align}
where $C_{\max} = \argmax_{E_n} (1 - \cos(t E_n/\hbar))$. If $t E_{n}/\hbar \leq 1$ for all $n$, then this upper bound simply reads
\begin{align}\label{eq:spectrumBound}
	&P_{\rm succ}(X|E)_{H,t} \\
	&\leq \frac{1}{2} + (1 - \cos(t \|H\|_\infty/\hbar)) D(\rho_0,\rho_1)\ ,\nonumber
\end{align}
which will be useful for comparison below.
For larger values of $t E_n/\hbar$ it is easy to see that
\begin{align}\label{eq:simpleTrace}
	P_{\rm succ}(X|E)_{H,t} \leq \frac{1}{2} + \frac{\gamma t \|H\|_\infty D(\rho_0,\rho_1)}{2 \hbar}\  ,
\end{align}
where
\begin{align}
	\gamma := \left\{ \begin{array}{ll}5/\pi & \mbox{if } 1 < t E_n/\hbar < 4\ ,\\
	3/\pi & \mbox{otherwise}\ .
	\end{array}\right.
\end{align}

\subsubsection{A bound in terms of the average energy}

A sometimes more satisfying bound can be obtained in terms of the average energy. Note that we can upper bound~\eqref{eq:protoBound}
as
\begin{align}
	\frac{1}{2} + \frac{1}{2}\frac{\gamma t}{\hbar} \sum_n E_n \bra{E_n} \tilde{A}^+ \ket{E_n}\ ,
\end{align}
and hence we may use the fact that 
\begin{align}
	\tilde{A}^+ &= \frac{1}{2}(\tilde{A}^+ - \tilde{A}^-) + \frac{1}{2}(\tilde{A}^+ + \tilde{A}^-)\\
	&\frac{1}{2}(\tilde{\rho}_1 - \tilde{\rho}_0) + \frac{1}{2}|\tilde{\rho}_1 - \tilde{\rho}_0|\ ,
\end{align}
to obtain
\begin{align}
	&P_{\rm succ}(X|E)_{H,t}\\
	&\leq \frac{1}{2} + \frac{ \gamma t \left(\tr(H|\tilde{\rho}_1 - \tilde{\rho}_0|) + \tr(H\tilde{\rho}_1) - \tr(H\tilde{\rho}_0)\right)}
	{4 \hbar} \ .\nonumber
\end{align}
Now, the asymmetry between the labels $0$ and $1$ is inessential. The bound is true if we swap the two state labels, as may be seen by repeating the above derivation swapping the role of the two state labels. Averaging these two bounds we find the following symmetric bound
\begin{align}\label{eq:avg2States}
	P_{\rm succ}(X|E)_{H,t}
	\leq \frac{1}{2} + \frac{ \gamma t \tr(H|\tilde{\rho}_1 - \tilde{\rho}_0|)}
	{4 \hbar} \ .
\end{align}
This bound should be compared with the bound~\eqref{eq:spectrumBound} in terms of the maximum energy in which the trace distance appears. The quantity on the right hand side of~\eqref{eq:avg2States} is loosely an energy-weighted trace distance. Whereas this bound is certainly stronger for a particular choice of $H$, it does not any longer bear an obvious quantitative relation to the Helstrom bound in terms of the trace distance. 
In deriving~\eqref{eq:avg2States} we have made use of the knowledge of the optimal measurements for distinguishing a pair of states. This is no longer possible in more complicated cases, even where unlimited time is allowed~\cite{surveyDiscr}. We can weaken the bound somewhat, using the fact that $\rho,H\geq 0$ to obtain a bound explicitly in terms of the average energy as follows
\begin{align}\label{eq:avg2States2}
	&P_{\rm succ}(X|E)_{H,t}\\
	&\leq \frac{1}{2} + \frac{ \gamma t \tr[H(\tilde{\rho}_0 + \tilde{\rho}_0)]}
	{4 \hbar} \ .\nonumber
\end{align}
So we see that the average energy of the joint system and ancilla place a bound on the success probability of state discrimination, as claimed.
This bound may be generalized easily to the case of more than two classical symbols and an aribtrary distribution $\{p_x\}_x$. 
We show in the appendix that
\begin{theorem}
	Suppose $H \geq 0$. Then the probability of distinguishing $\rho_0,\ldots,\rho_{N-1}$ given with probabilities
	$p_0,\ldots,p_{N-1}$ using the Hamiltonian $H$ obeys
	\begin{align}
		&P_{\rm succ}(X|E)_{H,t}\\
		&\leq \pmax +\frac{\hat{\gamma}t}{\hbar} \sum_{x=0}^{N-1} p_x \tr\left(H\tilde{\rho}_x\right)\ ,\nonumber
	\end{align}
	where
	\begin{align}
		\hat{\gamma} := \left\{ \begin{array}{ll}5/\pi & \mbox{if } \forall E_n, 1 < t E_n/\hbar < 4\ ,\\
			3/\pi & \mbox{otherwise}\ .
		\end{array}\right.
	\end{align}
\end{theorem}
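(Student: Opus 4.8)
The plan is to repeat the two-state computation of Section~III.A while keeping all $N$ terms, and to exploit that each $\tilde{\rho}_x \geq 0$ is itself a genuine state, so that the positive/negative-part decomposition used for two states is no longer needed. First I would expand the success probability with the measurement operators of~\eqref{eq:measurementOperators} and split it into three sums,
\begin{align}
P_{\rm succ}(X|E)_{H,t} &= \sum_x p_x \tr\left((\id \otimes P_x)\tilde{\rho}_x\right)\nonumber\\
&\quad + \sum_x p_x \tr\left(W_{x}^{1}\tilde{\rho}_x\right) + \sum_x p_x \tr\left(W_{x}^{2}\tilde{\rho}_x\right)\ ,
\end{align}
to be treated separately. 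The leading sum is handled by the single algebraic identity $P_x \pinit = \pinit P_x = \delta_{x,\xmax}\,\pinit$, which holds because the ancilla is initialized in $\init$. Hence $\tr((\id \otimes P_x)\tilde{\rho}_x) = \tr(\rho_x)\tr(P_x\pinit) = \delta_{x,\xmax}$, and this sum collapses to exactly $\pmax$, reproducing the a priori guessing probability.

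For the $W^{1}$ sum I would again use the identity above together with cyclicity of the trace. Every term with $x \neq \xmax$ vanishes, since $\tilde{\rho}_x(\id \otimes P_x) = \rho_x \otimes (\pinit P_x) = 0$, which is the exact analogue of the fact that $P_1\pinit = 0$ killed the corresponding term in the two-state case. The one surviving term equals $2\pmax\,\mathrm{Re}\,\tr(R\tilde{\rho}_{\xmax})$, and since $R = \sum_n(\e^{-iE_n t/\hbar}-1)\proj{E_n}$ gives $\mathrm{Re}\,\tr(R\tilde{\rho}_{\xmax}) = \sum_n(\cos(E_n t/\hbar)-1)\bra{E_n}\tilde{\rho}_{\xmax}\ket{E_n} \leq 0$ — a sum of non-positive factors times non-negative diagonal matrix elements of the state $\tilde{\rho}_{\xmax}$ — this whole sum is non-positive and may simply be discarded from the upper bound.

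For the $W^{2}$ sum the estimate is in fact cleaner than for two states. Since $0 \leq \id \otimes P_x \leq \id$ and $\tilde{\rho}_x \geq 0$, I have $\tr(W_{x}^{2}\tilde{\rho}_x) = \tr\left(R^\dagger(\id \otimes P_x)R\,\tilde{\rho}_x\right) \leq \tr\left(R^\dagger R\,\tilde{\rho}_x\right)$. Computing $R^\dagger R = 2\id - U - U^\dagger = 2\sum_n(1-\cos(E_n t/\hbar))\proj{E_n}$ and applying the elementary inequality $2(1-\cos\theta) \leq \hat{\gamma}\,\theta$ with $\theta = E_n t/\hbar \geq 0$ (the hypothesis $H \geq 0$ guaranteeing $\theta \geq 0$, and the stated $\hat{\gamma}$ being chosen according to which regime the energies occupy) yields $\sum_x p_x\tr(W_{x}^{2}\tilde{\rho}_x) \leq \frac{\hat{\gamma}t}{\hbar}\sum_x p_x\sum_n E_n\bra{E_n}\tilde{\rho}_x\ket{E_n} = \frac{\hat{\gamma}t}{\hbar}\sum_x p_x\tr(H\tilde{\rho}_x)$, using $H = \sum_n E_n\proj{E_n}$. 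Combining the three estimates gives the claimed bound.

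The main point to get right is the $W^{1}$ sum. In the symmetric two-state derivation that term vanished identically, but the asymmetric initialization of the ancilla in $\init$ leaves one residual contribution; the crux is to recognize that it is non-positive, so that dropping it preserves the direction of the inequality rather than attempting to symmetrize over labels as was done for~\eqref{eq:avg2States}. Beyond that, only the elementary trigonometric inequality fixing $\hat{\gamma}$ requires care, and it is the same one already used in the two-state bound~\eqref{eq:simpleTrace}.
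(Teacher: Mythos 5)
Your proof is correct and takes essentially the same approach as the paper: the paper simply packages your three-way computation into Lemma~\ref{lem:basicBound} (stated for a general Hermitian $A$, hence the $|A|$ versus $A^+$ distinction) and then applies it with $A=\rho_x$, where positivity of $\rho_x$ makes the negative part vanish — which reproduces exactly your observations that the $W_x^1$ terms vanish for $x\neq\xmax$ and are non-positive at $x=\xmax$, and your $\tr(W_x^2\tilde{\rho}_x)\leq\tr(R^\dagger R\,\tilde{\rho}_x)$ estimate followed by the trigonometric bound of Lemma~\ref{lem:gammaBound}.
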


Note that the term $\sum_{x} p_x \tr\left(H\tilde{\rho}_x\right)$ is the energy of the encoding and ancilla register
averaged over the choice of input symbols.

\subsection{A lower bound on $P_{\rm guess}(X|E)$}

We now exhibit a specific measurement strategy for two equiprobable states, which attains our upper bound up to a constant factor. 
We again focus on the case of two possible input states, as for the general setting there is no analytic procedure
of obtaining the optimal measurements even in the setting of unlimited time. Our construction for two states will make explicit 
use of this optimal measurement. 

Let $A = \rho_1 - \rho_0$. It is well known~\cite{helstrom:detection} that the optimal distinguishing measurement
in the time unlimited case without the use of an ancilla
is given by $\{\Pi_{A^+},\Pi_{A^-}\}$, where $\Pi_{A^+}$ and $\Pi_{A^-}$ are projectors on the positive and negative
eigenspace of $A$ respectively. To construct our Hamiltonian $H$, let us diagonalize 
$A = \sum_j \lambda_j \proj{u_j}$, and define $A^+ := \sum_{j,\lambda_j \geq 0} \lambda_j \proj{u_j}$ 
	and $A^- :=  A^+-A$.
Consider the operator
\begin{align}
	\hat{H} &:= \Pi_{A^-} \otimes \id
	+\Pi_{A^+} \otimes (\outp{0}{1} + \outp{1}{0})\ .
\end{align}
Clearly, $\hat{H}$ is Hermitian and unitary, and hence has eigenvalues $\pm 1$.
In fact, $\hat{H}$ is the unitary we would use to achieve the optimum distinguishing probability if we were unconcerned with time.
We now define a Hamiltonian $H$ 
\begin{align}\label{eq:achieveH}
	H := E_{\rm max} (\hat{H} + \id)/2\ .
\end{align}
For comparison with our upper bound of~\eqref{eq:spectrumBound} $H$ obeys the condition $H \geq 0$ and has largest eigenvalue equal to $E_{\rm max} = \|H\|_\infty$.
A simple calculation provided in the appendix shows that for our choice of $H$ we have
\begin{align}
	P_{\rm succ}(X|E)_{H,t} = \frac{1}{2} + \frac{1}{4}(1 - \cos(t \|H\|_\infty/\hbar)) D(\rho_0,\rho_1)\ 
\end{align}
which gives a lower bound to $P_{\rm succ}(X|E)$ maximized over
all possible $H$ in time $t$. This bound matches the upper bound of~\eqref{eq:spectrumBound} up to a factor of $1/4$.

Note that $\hat{H}$ effectively implements a variant of the controlled-NOT (c-NOT) operation on the encoding space and the ancilla. 
For more than two inputs states, one could construct a similar $\hat{H}$ implementing a controlled addition mod $N$ on the ancilla, as long as the optimum distinguishing measurement in the case of unlimited time is a projective measurement on the encoding space.
This would give a similar relation between time and the original probability of distinguishing the given states. 
However, it is known that there do exist choices of encodings $\rho_x$ such that the optimum measurement is not projective, and hence we omit
this restricted form of generalization.

\section{Applications}

Let us now consider several applications of our simple bound. 

\subsection{Minimum distinguishing time and the Margolus-Levitin theorem}\label{sec:ML}

The first application we are interested in, is a return to our initial question: Just how quickly can we acquire information? 
That is, what is the minimum time needed to extract classical information encoded in a quantum system?
Note that with the Hamiltonian $H$ in the lower bound for two equiprobale states, there does indeed exist a way to optimally distinguish 
the two states in time $t = \hbar \pi/\|H\|_\infty$. However, since there is a small gap to our upper bound it would be an open question,
whether it is possible to achieve the same in an even shorter amount of time.

\subsubsection{Minimum time}
Yet, note that our upper bounds on $P_{\rm succ}(X|E)_{H,t}$ can also be understood as lower bounds on the time required to optimally distinguish the given
states, retrieving the maximum amount of information from the encoding. Let us first consider our most general bound for large $\setX$.
We have that if we can distinguish optimally in time $t_{\rm distinguish}$ our upper bound must
be at least as large as the optimum $P_{\rm guess}(X|E)$. That is, 
\begin{align}
	\pmax + \frac{\gamma t_{\rm distinguish}}{\hbar}\sum_{x=0}^{N-1} p_x \tr\left(H\tilde{\rho}_x\right) \geq P_{\rm guess}(X|E)\ ,
\end{align}
and hence
\begin{align}\label{eq:minTime}
	t_{\rm distinguish} \geq \frac{(P_{\rm guess}(X|E) - \pmax) \hbar}{\gamma \sum_{x=0}^{N-1} p_x \tr\left(H\tilde{\rho}_x\right)}\ .
\end{align}

\subsubsection{Margolus-Levitin theorem}

Let us now consider the special case where two equiprobable input encodings are perfectly distinguishable.
That is, $\rho_0 = \proj{0}$ and $\rho_1 = \proj{1}$.
Our task is now quite simple: We merely wish to turn the state $\ket{1}\ket{0}$ of the encoding and ancilla system to the state
$\ket{1}\ket{1}$, that is, we wish to transform one vector into its orthogonal. Note that given unlimited time (or energy) we can
succeed perfectly at this task and hence $P_{\rm guess}(X|E) = 1$. From~\eqref{eq:minTime} we thus have
\begin{align}
	t_{\rm distinguish} &\geq \frac{\hbar}{2\gamma \tr\left(H\tilde{\rho}_1\right)}\ .
\end{align}
Our bound can hence also be understood as putting a limit on the time that it takes to turn a state vector to its orthogonal (on the ancilla),
given some additional resource (the encoding register).

A bound on the minimum time that it takes to turn a vector into its orthogonal is indeed known as the Margolus-Levitin theorem~\cite{mlTheorem}. 
In particular, their bound applied to our situation involving both the encoding and the ancilla register gives
\begin{align}\label{eq:ML}
	t_{\rm ML} & \geq \frac{\hbar \pi}{2 \tr\left(H\tilde{\rho}_1\right)}\ .
\end{align}
Such a bound had previously only been derived from the time-energy uncertainty principle where instead of the average energy, we have the energy spread, i.e, 
the difference in the largest and smallest eigenvalue of the Hamiltonian (see~\cite{lloyd:margolus} for a review of history).
The Margolus-Levitin theorem has been used to place bounds on the 
fundamental speed of computation~\cite{lloyd:margolus}, and was even slightly improved for some special cases~\cite{mlImprovement}. 
Note however that for the Hamiltonian constructed in~\eqref{eq:achieveH} we have $\tr\left(H\tilde{\rho}_1\right) = E_{\rm max}/2$ and
hence the bound provided by Margolus-Levitin is in fact tight as we know that~\eqref{eq:achieveH} lets us achieve the optimum success probability in time
$t = \hbar \pi/E_{\rm max}$. This shows that it is our upper, rather than our lower bound that can be improved.

Since we have $\gamma = 3/\pi$ or $\gamma = 5/\pi$ depending on the parameters, our bound is slightly worse than the Margolus-Levitin bound
which stems from our somewhat crude bound on $(1 - \cos(t E_n/\hbar))$. 
Note, however, that our bound
considers a more specialized situation, namely turning the ancilla to its orthogonal given the encoding, but in turn applies to any kind of input states.

That we obtain a Margolus-Levitin type theorem as a side effect of our analysis is not very surprising: Clearly, the speed of dynamical evolution
places a bound on how quickly we can transfer information from one system into the other. In turn however, note that a bound on how quickly transformation
can be transferred does translate into bounds on the speed of evolution as well and one can think of the speed of dynamical 
evolution when applied to a computation~\cite{lloyd:margolus} as being limited by how quickly one can transfer the necessary information required for the subsequent
stage of computation. 

\subsection{Time-dependent Tsirelson-bound}\label{sec:tsirel}

As another example on how our bound can be used we will derive a time-dependent Tsirelson's bound~\cite{tsirel:original} for the Bell inequality~\cite{bell} known as the CHSH inequality~\cite{chsh}. 

\subsubsection{CHSH as a game}\label{sec:game}

We briefly describe the CHSH inequality in its more modern form as a game involving two distant players, Alice and Bob.
A detailed account of this formulation and how it allows us to recover the original form of the CHSH inequality can for example be found in~\cite{steph:diss}.
In the CHSH game, we imagine that we pose a question $y \in \01$ to Alice and a question $z \in \01$ to Bob, chosen uniformly at random, i.e., $p(y) = p(z) = 1/2$.
These questions can be identified with the choice of measurement setting in the usual formulation. Alice and Bob now return answers $a \in \01$ and $b \in \01$ respectively, where we say that Alice and Bob \emph{win} the game if and only if 
\begin{align}\label{eq:chshCondition}
	y \cdot z = a + b \mod 2\ .
\end{align}
Alice and Bob may thereby agree on any strategy beforehand, but they can no longer communicate once the game starts. In the quantum setting, this 
strategy corresponds to a choice of shared state and measurements, and in an experiment the no-signaling assumption is employed to enforce their
inability to communicate. Clearly, one may write the probability that Alice and Bob win for a particular strategy as
\begin{align}
	p_{\rm win} = \frac{1}{4} \sum_{y,z \in \01} \sum_{\substack{a,b\\a + b = y \cdot z}} \Pr[a,b|y,z]\ ,
\end{align}
where $\Pr[a,b|y,z]$ denotes the probability that Alice and Bob return answers $a$ and $b$ given questions $y$ and $z$.
For any classical strategy, $p_{\rm win} \leq 3/4$ but quantumly there exist a strategy that achieves $p_{\rm win} = 1/2 + 1/(2\sqrt{2}) \approx 0.853$.
This is in fact optimal, since Tsirelson has shown~\cite{tsirel:original,tsirel:separated}
that for any quantum strategy
\begin{align}\label{eq:tsirelBound}
	p_{\rm win} \leq \frac{1}{2} + \frac{1}{2\sqrt{2}}\ .
\end{align}

\subsubsection{Strategies and state discrimination}\label{sec:tsirelNotation}

For our purposes, it will be convenient to employ a simple observation about what Bob has to do in order to produce the right answer in the game, 
which was described in more detail in~\cite{steph:diss}.
Let $\rho_{y,a}$ denote the state of Bob's system conditioned on the fact that Alice received question $y$ and has given answer $a$.
Note that Bob's system will be placed in this state with probability $p(y,a) = p(a|y)/2$. 
For $z=0$, ~\eqref{eq:chshCondition} the rules of the game
state that Alice and Bob win if and only if Bob returns the same answer as Alice, that is, $b = a$. In other words, Bob would like to determine, which
of the following two states he is given
\begin{align}
	\sigma_0^{z=0} &:= \left(q^{z=0,0}_0\rho_{0,0} + q^{z=0,0}_1 \rho_{1,0}\right)\ ,\\
	\sigma_1^{z=0} &:= \left(q^{z=0,1}_0\rho_{0,1} + q^{z=0,1}_1\rho_{1,1}\right)\ ,
\end{align}
where 
\begin{align}
	q^{z=0,0}_y &= p(0|y)/(p(0|0) + p(0|1))\ ,\\
	q^{z=0,1}_y &= p(1|y)/(p(1|0) + p(1|1))\ ,
\end{align}
and the probability of $\sigma_x^{z=0}$ is given by $p_x^{z=0} = (p(x|0) + p(x|1))/2$.
That is, Bob would simply try to extract classical information stored
in quantum states, which is exactly the setting that our bound applies to. Producing a classical outcome on the ancilla system is very natural in this setting
as we can imagine that when giving his answer Bob simply returns his ancilla to a referee who decides whether Alice and Bob 
win~\footnote{Note that our assumption that the referee is unrestricted contrasts with the view of computational complexity in which such
games play a role in interactive proof systems. There, Alice and Bob are all-powerful, but the referee has limited time at his disposal to decide
the outcome of the game.
We would like to emphasize that our aim here is entirely different since we are merely interested in the strength of correlations between Alice and
Bob that can be obtained within a certain time frame.}.
Similarly, if $z=1$ Bob would like to determine which of the following
two states he is given
\begin{align}
	\sigma_0^{z=1} &:= \left(q^{z=1,0}_0\rho_{0,0} + q^{z=1,0}_1 \rho_{1,1}\right)\ ,\\
	\sigma_1^{z=1} &:= \left(q^{z=1,1}_0\rho_{0,1} + q^{z=1,1}_1\rho_{1,0}\right)\ ,
\end{align}
where 
\begin{align}
	q^{z=1,0}_{y=0} &= p(0|0)/(p(0|0) + p(1|1))\ ,\\
	q^{z=1,0}_{y=1} &= p(1|1)/(p(0|0) + p(1|1))\ ,\\
	q^{z=1,1}_{y=0} &= p(1|0)/(p(1|0) + p(0|1))\ ,\\
	q^{z=1,1}_{y=1} &= p(0|1)/(p(1|0) + p(0|1))\ ,
\end{align}
the probability of $\sigma_0^{z=1}$ is $p_0^{z=1} = (p(0|0) + p(1|1))/2$, and the probability
of $\sigma_1^{z=1}$ is $p_1^{z=1} = (p(1|0) + p(0|1))/2$. 
The probability that Alice and Bob win the game for a particular strategy can now be expressed as
\begin{align}
	p_{\rm win} = \frac{1}{2}\sum_{z \in \01} P_{\rm guess}(X^z|E^z)\ ,
\end{align}
where we write $P_{\rm guess}(X^z|E^z)$ for Bob's success probability in solving the state discrimination problems described above for $z \in \01$. 
From this perspective, Tsirelson's bound provides us with an upper bound on how well we can solve these two problems on average.

\subsubsection{A time limited game}
In the usual setting of this game, Alice and Bob are essentially given an unlimited amount of time and energy to produce their answers. But how well can they do given only a limited amount of energy and time? Here, we consider a time-limited version of the CHSH game, in which Alice and Bob are given a fixed time $t$ to produce 
their answers. If no answers are given at time $t$, we automatically rule that Alice and Bob loose. 
Our goal will be to derive a time-dependent version of~\eqref{eq:tsirelBound}. For simplicity, we will thereby assume that Alice has an essentially unlimited amount of energy at her disposal and only Bob will be restricted in some fashion. 
Given the perspective that Bob has to solve a state discrimination problem to produce the right answer as explained above, it is clear that we can use our
general bound to address this setting. The use of an ancilla register is very natural, 
as we can view it as the message system holding  Bob's answer that is returned to the referee.

In the usual scenario, Alice and Bob can choose which state to share at the start of the game as part of their strategy. 
Note, however, that we cannot allow arbitrary starting states to begin with, as we want to put a limit on the energy that Bob has at his disposal. 
For simplicity, however, we will make the sole assumption that Bob's Hamiltonian is bounded as $\|H\|_\infty$. In the appendix, we will derive a general
time dependent Tsirelson bound from this assumption where we will need our generalization of the time bound for two input states to the case of non-uniform
input distributions.

Here, we will focus on the essential idea that underlies this bound which already becomes apparent if we consider a slightly simpler scenario
in which Alice's marginal distributions are uniform ($p(a|y) = 1/2$ for all $y$). This scenario is well motivated if we imagine that
there is a 
source supplying Alice and Bob with the maximally entangled state which lies outside of their control, and 
their strategy is restricted to their choice of two-outcome observables.
In this case, Alice's outcome distribution will either be deterministic or uniform.
In the deterministic case,
Alice essentially plays a classical strategy. To obtain a quantum advantage in the case
of unlimited time, Alice's outcome distributions will be uniform, and we will hence focus on this case.

To obtain a time-dependent Tsirelson bound, we now employ our simple bound involving the original trace distance of the two states that we wish to discriminate~\eqref{eq:simpleTrace}. 
We have by Tsirelson's bound that
\begin{align}
	\frac{1}{2}\sum_{z \in \01} P_{\rm guess}(X^z|E^z) \leq \frac{1}{2} + \frac{1}{2\sqrt{2}}\ ,
\end{align}
and hence by~\eqref{eq:distStandard} and the fact that $p(a|y) = 1/2$
\begin{align}
	\frac{1}{2}\sum_{z \in \01} D(\sigma_0^z,\sigma_1^z) \leq \frac{1}{\sqrt{2}}\ ,
\end{align}
otherwise there would exist a better strategy for Alice and Bob at long times. So we have from~\eqref{eq:simpleTrace} that
\begin{align}
	p_{\rm win} \leq \frac{1}{2} + \frac{\gamma t \|H\|_\infty}{2\sqrt{2}\hbar}\ .
\end{align}
In particular, this means that if we allow only a limited amount of energy by Bob (e.g., by demanding that $\|H\|_\infty = 1$), then Bob needs time at least
\begin{align}\label{eq:mintime}
	t \geq \frac{\hbar}{\gamma \|H\|_\infty}\ 
\end{align}
to achieve the optimum quantum violation of CHSH.
Note that to achieve the optimum quantum violation, Alice's marginals will in fact be uniform, and hence this 
is indeed the minimum time required.

Clearly, for small time frames, it would be better for Alice and Bob to play a classical strategy in which Bob can just
return the ancilla $\ket{0}$ ''as is'' to the referee. The tradeoff betweeen the classical and quantum strategies in our setting
can be captured when considering arbitrary distributions, which we will address in the appendix. In particular, we will show that
\begin{corollary}
	Let Bob's Hamiltonian be scaled such that $H \geq 0$.
	Then the maximum success probability of winning the CHSH game for Alice
	and Bob in time $t$ obeys
	\begin{align}\label{eq:timeTsirel}
		p_{\rm win}^t  \leq \frac{3}{4} + \frac{\gamma t \|H\|_\infty}{\sqrt{2} \hbar}\ ,
	\end{align}
where
	\begin{align}
		\gamma := \left\{ \begin{array}{ll}5/\pi & \mbox{if } 1 < t E_n/\hbar < 4 \ ,\\
			3/\pi & \mbox{otherwise} \ .
		\end{array}\right.
	\end{align}
\end{corollary}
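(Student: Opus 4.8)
The plan is to reduce the time-limited CHSH game to two independent time-limited two-state discrimination problems, one for each of Bob's questions $z \in \01$, and then to separate in each the part of Bob's success that is fixed by the priors alone from the part that genuinely costs energy and time. Recall from Section~\ref{sec:tsirelNotation} that
\begin{align}
	p_{\rm win}^t = \frac{1}{2}\sum_{z \in \01} P_{\rm succ}(X^z|E^z)_{H,t}\ ,
\end{align}
so it suffices to bound each $P_{\rm succ}(X^z|E^z)_{H,t}$, where for fixed $z$ Bob discriminates $\sigma_0^z,\sigma_1^z$ occurring with priors $p_0^z,p_1^z$ summing to one.

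First I would apply the non-uniform two-state version of our time bound (the distribution-dependent form of Corollary~\ref{cor:traceDist}) to each subproblem. Relabelling so that the a-priori more likely state carries symbol $0$, the ancilla is initialised to $\ket{0}$ and the cross term $W_x^1$ vanishes exactly as in the derivation of~\eqref{eq:finalPSUCC}, since then $P_1\pinit = \pinit P_1 = 0$. Carrying the priors through $\tilde\Gamma^z := p_1^z\tilde\rho_1^z - p_0^z\tilde\rho_0^z$, using $\tr((\tilde\Gamma^z)^+) = P_{\rm guess}(X^z|E^z) - p_{\max}^z$, and applying the crude estimate $1-\cos(tE_n/\hbar)\le \gamma t E_n/(2\hbar)$ gives a bound of the form
\begin{align}
	P_{\rm succ}(X^z|E^z)_{H,t} \le p_{\max}^z + \frac{\gamma t\|H\|_\infty}{\hbar}\left(P_{\rm guess}(X^z|E^z)-p_{\max}^z\right)\ ,
\end{align}
with $p_{\max}^z=\max(p_0^z,p_1^z)$ the unlimited-time success probability attainable with no measurement. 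As a sanity check, for uniform priors this collapses to the equiprobable bound~\eqref{eq:twoStateSummary}.

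Averaging over $z$ then splits $p_{\rm win}^t$ into a purely classical piece $\tfrac12\sum_z p_{\max}^z$ and an energy-weighted quantum-advantage piece $\tfrac{\gamma t\|H\|_\infty}{\hbar}\cdot\tfrac12\sum_z(P_{\rm guess}(X^z|E^z)-p_{\max}^z)$. For the classical piece I would run the short optimisation over Alice's conditionals: writing $s=(p(0|0)+p(0|1))/2$ one finds $\tfrac12\sum_z p_{\max}^z = \tfrac12\max(s,1-s)+\tfrac14+\tfrac14|p(0|0)-p(0|1)|$, whose maximum over all marginals is exactly $\tfrac34$, the classical CHSH value, reflecting that ``output the most likely answer'' is a classical strategy. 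For the quantum piece I would invoke Tsirelson's bound~\eqref{eq:tsirelBound}, $\tfrac12\sum_z P_{\rm guess}(X^z|E^z)\le \tfrac12+\tfrac{1}{2\sqrt2}$, together with $\tfrac12\sum_z p_{\max}^z\ge\tfrac12$ (each $p_{\max}^z\ge\tfrac12$), to get $\tfrac12\sum_z(P_{\rm guess}(X^z|E^z)-p_{\max}^z)\le \tfrac{1}{2\sqrt2}$. Combining the two pieces yields $p_{\rm win}^t\le \tfrac34+\tfrac{\gamma t\|H\|_\infty}{2\sqrt2\,\hbar}$, which in particular establishes~\eqref{eq:timeTsirel}.

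The crux is the second step rather than the final algebra. For non-uniform Alice marginals the ``free'' probability $p_{\max}^z$ is both state- and question-dependent, and one must verify that the generalized two-state bound attributes exactly $p_{\max}^z$ to the prior -- which is what forces the ancilla to be initialised to the most likely symbol so that $W_x^1$ drops out -- and the residual, proportional to $P_{\rm guess}-p_{\max}^z$, to the energy. Only with this clean bookkeeping do the two otherwise unrelated inputs, the combinatorial bound $\tfrac12\sum_z p_{\max}^z\le\tfrac34$ and Tsirelson's bound on $\tfrac12\sum_z P_{\rm guess}$, lock together to give the classical constant $\tfrac34$ plus a time- and energy-dependent quantum correction; the precise leading constant on that correction follows from whichever form of the two-state bound is used, the stated $\gamma t\|H\|_\infty/(\sqrt2\,\hbar)$ being a safe upper bound.
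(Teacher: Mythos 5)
Your proposal is correct and at its core follows the same route as the paper's appendix proof: decompose $p_{\rm win}^t = \frac{1}{2}\sum_{z\in\01} P_{\rm guess}(X^z|E^z)_{H,t}$, apply the non-uniform two-state time bound (Corollary~\ref{cor:traceDist}) to each question $z$, and then combine Tsirelson's bound on the unlimited-time guessing probabilities with the classical value $3/4$ for the prior-only piece. Two of your steps differ in execution, and both are sound. First, for the classical piece the paper's lemma leaves $\frac{1}{2}\sum_z \pmax^z$ in place and the corollary then asserts $\frac{1}{2}\sum_z \pmax^z \leq \frac{3}{4}$ by appeal to the fact that outputting the most likely symbol is a classical strategy; your explicit optimization over Alice's marginals, $\frac{1}{2}\max(s,1-s)+\frac{1}{4}+\frac{1}{4}\left|p(0|0)-p(0|1)\right|$ with maximum exactly $\frac{3}{4}$, verifies the same claim concretely. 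Second, for the quantum piece the paper symmetrizes $P_{\rm guess}(X^z|E^z) = \frac{1}{2}\left(1+\sum_{x\in\01} \Delta(p^z_{\bar x}\sigma^z_{\bar x}, p^z_x\sigma^z_x)\right)$ and discards the nonnegative $x=\xmin$ terms, arriving at $\frac{1}{2}\sum_z \Delta(\pmin^z\sigma^z_{\xmin},\pmax^z\sigma^z_{\xmax}) \leq \frac{1}{\sqrt{2}}$; you instead use the identity $P_{\rm guess}(X^z|E^z) = \pmax^z + \Delta(\pmin^z\sigma^z_{\xmin},\pmax^z\sigma^z_{\xmax})$ together with $\pmax^z \geq \frac{1}{2}$, which gives the tighter estimate $\frac{1}{2}\sum_z \Delta(\pmin^z\sigma^z_{\xmin},\pmax^z\sigma^z_{\xmax}) \leq \frac{1}{2\sqrt{2}}$. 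As a consequence you actually prove the stronger statement $p_{\rm win}^t \leq \frac{3}{4} + \gamma t\|H\|_\infty/(2\sqrt{2}\,\hbar)$, of which the corollary as stated is a weakening by a factor of $2$ in the time-dependent term, so your argument certainly suffices.
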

We could also derive a more general bound in terms of Bob's average energy using~\eqref{eq:avg2States}.
However, such a bound does not compare easily to the original Tsirelon's bound.

Of course, the minimum time~\eqref{eq:mintime} is extremely small, and irrelevant for any practical tests of CHSH. Indeed, it is not our intention to question the validity of present CHSH experiments or suggest any loopholes caused by an insufficient distance for Alice and Bob compared to the time it takes them to achieve Tsirelson's bound.
Instead, we provided the present analysis as an illustrative example of how our bound applies. 

We would like to point out that~\eqref{eq:timeTsirel}
tells us that the strength of non-local correlation is indeed a function of time.
Furthermore,~\eqref{eq:mintime} tells us that there exists a fundamental time required to establish maximally strong quantum correlations. 
Finally, we note that one can also interpret~\eqref{eq:timeTsirel} in another way: Let's suppose that we were to fix a time $t$ and observe that Alice and Bob tend to win the game with probability at least $q$. We can now rewrite~\eqref{eq:timeTsirel} to obtain a lower bound on $\|H\|_\infty$. That is we can conclude that Bob had a certain energy at his disposal, and the strength of non-local correlations in this setting provides us with a form of ''energy witness'' for Bob. This also holds for the most general case discussed in the appendix.

\section{Discussion}

\subsection{Choice of initial state}
We obtained a series of simple bounds on how well we can recover classical information stored in a quantum system within a certain timeframe.
Let us now first consider what role the choice of initial state of the 
ancilla played in our bounds. During our discussions we assumed that the ancilla started out in the classical state corresponding to the most likely symbol $\xmax$. This reflects the fact that the distinguisher \emph{does} have full knowledge not only about the states $\rho_x$ themselves, but also about the distribution $p_x$. In particular, this means that without touching the quantum register, he can always achieve a success probability of $\pmax$ by outputting $\xmax$. 
Clearly, we could have chosen any other classical symbol as our starting point, and our bounds can easily be adapted accordingly. 
This holds even for an arbitrary pure state of the ancilla.
Yet, such a choice does not reflect the distinguisher's apriori knowledge.

Another option would be to let the ancilla start out in a special blank state, which intuitively corresponds to an outcome of ``don't know''.
and is orthogonal to any other outputs. It is straightforward to apply our methods to obtain a similar bound for this case.
Yet, note that using a blank ancilla state is conceptually rather different since it means that we essentially neglect the apriori
knowledge that a distinguisher has available. 

\subsection{Input size}

Our bound is especially useful, if we are merely concerned with the probability of success that can be achieved withing a certain time $t$ \emph{in principle}, 
using any physically allowed operation $H$. This is 
indeed interesting when we consider the problem of Bell inequalities where we wanted to obtain a bound on how well Alice and Bob can violated CHSH
within a given time frame, when they can \emph{choose} any Hamiltonian they like subject to energy constraints alone. 
In particular, we would like to emphasize that the time required to acquire classical information in our setting is not limited by the size of the alphabet $\setX$,
but merely by the choice of encodings. In practise, however, there are much more stringent constraints
on how quickly information can be transferred that depend on the geometry of the ancilla, leading to additional constraints on the interaction Hamiltonian $H$.
For example, it could be that $H$ can consist only of two qubit interactions, and interactions between the encoding system and the ancilla are limited to their boundary.
In this case, the size of the alphabet $\setX$ clearly \emph{does} matter, and stronger bounds therefore should depend strongly on the exact form of $H$.
We note that some bounds on time scales for particular Hamiltonians $H$ do follow from the decoherence and thermodynamics literature~\cite{gogolin:mthesis,lazyStates} for pure state 
encodings, yet since such bounds
typically involve $\|H_{\rm int}\|_\infty$, where $H_{\rm int}$ is the interacting part of $H$ they offer little advantage in our setting.
To see how such bounds are related to ours is most easily seen when considering the conditional von Neumann entropy $\HS(X|E)$. Note that if all $\rho_x$
are pure the overall cqq-state $\rho_{XEA}$~\footnote{A cqq-state is a classical-quantum-classical state, here classical on the source registers $X$, quantum on the encoding system $E$
and, before the referee's measurement, quantum on the ancilla $A$.} 
is pure as well. Hence, $\HS(X|E) = \HS(XE) - \HS(E) = \HS(A) - \HS(E)$. To determine how $\HS(X|E)$ can change with time we would thus like to determine
how the entropy of the reduced systems $A$ and $E$ evolves with time which has been studied for the von Neumann entropy in the decoherence literature where an upper bound for the rate of change in entropy was obtained in terms of $\|H_{\rm int}\|_\infty$~\cite{lazyStates}. Similar considerations can be made for other entropies~\cite{adrian:mthesis}.
It is an interesting open question to obtain good bounds on such quantities for arbitrary $H$ that take more of their structure into account. 

\subsection{Open questions}

Clearly, this is not the only interesting open question. Closely related is the question of how much time is required to demonstrate non-local correlations if Alice and Bob are yet more restricted. Again, this could take the form of physical constraints on the ancilla, or be considered in the framework of circuit complexity where one cares about the number of two qubit interactions, i.e., gates, that they have to apply. The example of CHSH is too small for such constraints to make a difference, but do play an important role when considering more complicated inequalities. 

Furthermore, it would be nice to see if the slight gap between our bound and the Margolus-Levitin theorem can be closed completely using a more stringent analysis for the case of orthogonal encodings $\rho_x$. In particular, this means that one would rederive the exact form of the Margolus-Levitin theorem from the rate of information transfer alone.

\acknowledgments

SW thanks Oscar Dahlsten, Artur Ekert, Christian Gogolin, Peter Janotta, Jonathan Oppenheim, Renato Renner, Thomas Vidick 
and CQT's ''non-local club'' for interesting comments and discussions. SW would also like to thank Artur Ekert and Jonathan Oppenheim for
useful pointers to existing literature~\cite{mlTheorem, bekenstein:survey}.
SW was supported by the National Research Foundation (Singapore), and the Ministry of Education (Singapore). ACD acknowledges support through the ARC Centre of Excellence in Engineered Quantum Systems (EQuS), project number CE110001013.

\appendix

\section{Basic observations}

In this appendix, we provide the technical details of our claims. 
To this end, we first establish two simple lemmas from which we later derive all our results. Since we will use these in everything that follows we consider the generalized problem
where we wish to distinguish $N$ states $\rho_0,\ldots,\rho_{N-1}$.
The first lemma will be used to bound the success probabilies
using measurement operators $M_x = \id \otimes P_x + W_x$ where the label $x \in \{0,\ldots,N-1\}$ corresponds to one of the $N$ states we wish to idenitfy. 

\begin{lemma}\label{lem:basicBound}
	For any Hermitian operator $A \in \bop(\hin)$ with diagonalization $A = \sum_j \lambda_j \proj{u_j}$, and any $x\in\{0,\ldots,N-1\}$ 
	the operator $\tilde{A} := A \otimes \pinit$ satisfies
	\begin{align}
		\tr\left(W_x\tilde{A}\right) \leq
			2 \sum_n (1 - \cos(t E_n/\hbar)) \bra{E_n}\tilde{A}_x\ket{E_n}\ ,
	\end{align}
		where
	\begin{align}
		\tilde{A}_x &:= \left\{ 
		\begin{array}{ll}
			|A| \otimes \pinit & {\rm for\ } x = \xmax\ ,\\
			A^+ \otimes \pinit
			& {\rm otherwise}\ ,
		\end{array}\right. \\
	A^+ &:= \sum_{j,\lambda_j \geq 0} \lambda_j \proj{u_j}\ .
\end{align}
\end{lemma}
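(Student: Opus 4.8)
The plan is to split $\tr(W_x\tilde A)$ according to the decomposition $W_x = W_x^1 + W_x^2$ from~\eqref{eq:Wdef} and to bound the two pieces separately, treating the quadratic piece $W_x^2$ uniformly in $x$ and the linear piece $W_x^1$ by distinguishing the cases $x\neq\xmax$ and $x=\xmax$. Throughout I would write $\tilde A = \tilde A^+ - \tilde A^-$ with $\tilde A^{\pm} := A^{\pm}\otimes\pinit \geq 0$, where $A^- := A^+ - A = \sum_{j,\lambda_j<0}|\lambda_j|\proj{u_j}$, so that $\tilde A^+ + \tilde A^- = |A|\otimes\pinit$.

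First I would dispose of the quadratic term, which requires no case distinction. Writing $\tr(W_x^2\tilde A)=\tr((\id\otimes P_x)R\tilde A R^\dagger)$ by cyclicity and inserting $\tilde A=\tilde A^+-\tilde A^-$, the fact that $R\,(\cdot)\,R^\dagger$ preserves positivity~\cite{bathia:posBook} makes $R\tilde A^- R^\dagger$ positive, so $\tr((\id\otimes P_x)R\tilde A^- R^\dagger)\geq 0$ and may be dropped; then $0\leq \id\otimes P_x\leq\id$ gives
\begin{align}
\tr(W_x^2\tilde A) \leq \tr(R\tilde A^+ R^\dagger) = \tr(R^\dagger R\,\tilde A^+)\ . \nonumber
\end{align}
The computational input here is the elementary identity $R^\dagger R = \sum_n |e^{-iE_nt/\hbar}-1|^2\proj{E_n} = 2\sum_n(1-\cos(tE_n/\hbar))\proj{E_n}$, which turns the right-hand side into $2\sum_n(1-\cos(tE_n/\hbar))\bra{E_n}\tilde A^+\ket{E_n}$.

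The remaining work is the linear term $W_x^1$, and this is where the cases diverge. For $x\neq\xmax$ I would reuse the cancellation from the two-state derivation of~\eqref{eq:finalPSUCC}: since $P_x\pinit=\pinit P_x=0$, one has $(\id\otimes P_x)\tilde A=\tilde A(\id\otimes P_x)=0$, so cyclicity of the trace kills both summands of $\tr(W_x^1\tilde A)$, leaving the $W_x^2$ estimate with $\tilde A^+=\tilde A_x$ as the whole answer. For $x=\xmax$ the projector coincides with the initial state, $P_{\xmax}=\pinit$, so those cancellations fail; instead one has $\tilde A=(\id\otimes P_{\xmax})\tilde A=\tilde A(\id\otimes P_{\xmax})$, and cyclicity now removes both projectors to give $\tr(W_{\xmax}^1\tilde A)=\tr((R+R^\dagger)\tilde A)$. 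Using $R+R^\dagger=-2\sum_n(1-\cos(tE_n/\hbar))\proj{E_n}$ and adding the $W_{\xmax}^2$ estimate, the $\tilde A^+$ contributions cancel and one is left with $2\sum_n(1-\cos(tE_n/\hbar))\bra{E_n}\tilde A^-\ket{E_n}$; bounding $\tilde A^-\leq|A|\otimes\pinit=\tilde A_{\xmax}$ termwise, which is legitimate since $1-\cos\geq 0$, finishes this case.

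I expect the only real obstacle to be the bookkeeping in the $x=\xmax$ case: one must verify carefully that $P_{\xmax}=\pinit$ genuinely lets one strip the projectors from $(\id\otimes P_{\xmax})R+R^\dagger(\id\otimes P_{\xmax})$ inside the trace against $\tilde A$, and then check that the \emph{negative} sign of $R+R^\dagger$ combines with the quadratic estimate so that the potentially troublesome $\tilde A^+$ terms cancel rather than reinforce, leaving only the $\tilde A^-$ part which is then dominated by $|A|\otimes\pinit$. Everything else---positivity of $R\,(\cdot)\,R^\dagger$, the operator inequalities $0\leq\id\otimes P_x\leq\id$ and $\tilde A^-\leq|A|\otimes\pinit$, and the identity for $R^\dagger R$---is routine.
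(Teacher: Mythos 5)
Your proposal is correct and follows essentially the same route as the paper's proof: the same split $W_x = W_x^1 + W_x^2$, the same cancellation $P_x \pinit = \pinit P_x = 0$ for $x \neq \xmax$, the same reduction $\tr(W_{\xmax}^1\tilde{A}) = \tr((R+R^\dagger)\tilde{A})$, and the same positivity and $0 \leq \id \otimes P_x \leq \id$ estimates on the quadratic term via $R^\dagger R = 2\sum_n(1-\cos(tE_n/\hbar))\proj{E_n}$. The only difference is bookkeeping in the $\xmax$ case: you keep the exact (negative) $\tilde{A}^+$ contribution of $W_{\xmax}^1$ and let it cancel against the $W_{\xmax}^2$ estimate, obtaining the slightly sharper intermediate bound $2\sum_n(1-\cos(tE_n/\hbar))\bra{E_n}\tilde{A}^-\ket{E_n}$ before relaxing $\tilde{A}^- \leq |A|\otimes\pinit$, whereas the paper drops that non-positive contribution outright and sums $\tilde{A}^+ + \tilde{A}^- = |A|\otimes\pinit$; both yield the lemma as stated.
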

\begin{proof}
	Using the definition of $W_x$ from~\eqref{eq:Wdef} we evaluate the terms involving $W_x^1$ and $W_x^2$ separately. 
	Let us now first bound the term involving $W_x^1$.
	For $x\neq \xmax$ we have that
	\begin{align}
		&\tr\left(W_x^1 \tilde{A}\right)\nonumber\\
		&\qquad=\tr\left((\id \otimes P_x)R \tilde{A}\right) + \tr\left(R^\dagger (\id \otimes P_x)\tilde{A}\right)\\
		&\qquad=0\ ,
	\end{align}
	where we used the linearity and cyclicity of the trace, as well as the fact that $P_x \pinit = \pinit P_x = 0$ for all $x \neq \xmax$.
	Let $A^- := \sum_{j,\lambda_j < 0} \proj{u_j}$, and define
	$\tilde{A}^+ := A^+ \otimes \pinit$ and $\tilde{A}^- := A^- \otimes \pinit$. Note that $\tilde{A} = \tilde{A}^+ - \tilde{A}^-$.
	For $x=\xmax$ we can now use the fact that
	\begin{align}\label{eq:RDef}
		R &= U - I = \sum_n (\exp(-itE_n/\hbar) - 1) \proj{E_n}\ ,
	\end{align}
	to write
	\begin{align}
		&\tr\left(W_\xmax^1 \tilde{A}\right)\\
		&= \tr\left((\id \otimes P_\xmax)R \tilde{A}\right) + \tr\left(R^\dagger (\id \otimes P_\xmax)\tilde{A}\right)\nonumber\\
		&=\tr\left( (R + R^\dagger)\tilde{A}\right)\\
		&= \sum_{n=0}^\infty (\exp(-itE_n/\hbar) +\exp(itE_n/\hbar) - 2)\nonumber\\
		&\qquad\qquad\qquad
\bra{E_n}\tilde{A}\ket{E_n}\\
&=2 \sum_{n=0}^{\infty} (\cos(t E_n/\hbar) - 1)  \bra{E_n}\tilde{A}\ket{E_n}\ .\\
&=2 \sum_{n=0}^{\infty} (\cos(t E_n/\hbar) - 1)  \bra{E_n}(\tilde{A}^+-\tilde{A}^-)\ket{E_n}\\
&\leq 2 \sum_{n=0}^{\infty} (1 - \cos(t E_n/\hbar))  \bra{E_n}\tilde{A}^-\ket{E_n}\ ,
	\end{align}
	where the fourth equality follows from Euler's formula, 
	and the first inequality from the fact that $\cos(t E_n/\hbar) - 1 \leq 0$ and $\tilde{A}^+,\tilde{A}^- \geq 0$.

	It remains to bound the term involving $W_x^2$.
	First of all, since $\Lambda(X) = R X R^\dagger$ is a positive map~\cite{bathia:posBook}, and $\tilde{A}^+,\tilde{A}^- \geq 0$, we have
	that 
	\begin{align}\label{eq:posTerms}
		R\tilde{A}^+R^\dagger &\geq 0\ ,\\
		R\tilde{A}^-R^\dagger &\geq 0\ .
	\end{align}
	Note that for any $X,Z \geq 0$, we have $\tr(XZ) \geq 0$, and hence
	$\tr\left( (\id \oplus P_x) R\tilde{A}^-R^\dagger\right) \geq 0$.
	Second, note that $RR^\dagger = R^\dagger R$ and we have
	\begin{align}\label{eq:RR}
		&RR^\dagger = \\
		&= \sum_{n=0}^\infty \left(2 - \exp(i t E_n/\hbar) - \exp(-i t E_n/\hbar)\right)\nonumber\\
		&\qquad\qquad\qquad\proj{E_n} \\
		&= 2 \sum_n (1 - \cos(t E_n/\hbar)) \proj{E_n}\ ,
	\end{align}
	where the second equality follows by applying Euler's formula.
	We thus have
	\begin{align}
		&\tr\left(W_x^2 \tilde{A}\right)\\
		&=
		\tr\left((\id \otimes P_x)R \tilde{A}^+ R^\dagger\right) - 
		\tr\left((\id \otimes P_x)R \tilde{A}^- R^\dagger\right)\nonumber\\
		&\leq \tr\left(R^\dagger R \tilde{A}^+\right)\\
		&= 2 \sum_n (1 - \cos(t E_n/\hbar)) \bra{E_n}\tilde{A}^+\ket{E_n}
	\end{align}
	where the first inequality follows from~\eqref{eq:posTerms}, the fact that
	$0 \leq \id \otimes P_x \leq \id$ and the cyclicity of the trace, and the last equality from~\eqref{eq:RR}.
	Putting everything together, $\tr(W_x\tilde{A}) = \tr(W_x^1\tilde{A}) + \tr(W_x^2\tilde{A})$, we obtain the claimed result.
\end{proof}

We will also make repeated use of the following bound. Note that whereas the bound applies to a very large range of values $E_n \geq 0$, we will later be particularly
interested in the case of $t E_n/\hbar < 1$. Indeed the bound below is a great overestimate
if $t E_n/\hbar > 2\pi$, as $2(1-\cos(k)) \leq \gamma (k - 2 \pi \floor{k/(2\pi)})$.

\begin{lemma}\label{lem:gammaBound}
	Let $E_n \geq 0$. Then $2(1-\cos(t E_n/\hbar)) \leq \gamma t E_n/\hbar$ where
	\begin{align}
		\gamma := \left\{ \begin{array}{ll}5/\pi & \mbox{if } 1 < t E_n/\hbar < 4\ ,\\
			3/\pi & \mbox{otherwise}\ .
		\end{array}\right. 
	\end{align}
\end{lemma}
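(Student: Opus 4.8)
The plan is to abbreviate $k := t E_n/\hbar \ge 0$ and reduce the statement to the single real-variable inequality $2(1-\cos k) \le \gamma k$, which I would verify separately on the three regimes $0 \le k \le 1$, $1 < k < 4$, and $k \ge 4$. In each regime I would study the difference
\begin{align}
	h(k) := \gamma k - 2(1-\cos k)\ ,
\end{align}
whose derivative is $h'(k) = \gamma - 2\sin k$, and show $h \ge 0$. Two facts will do most of the work: $h(0) = 0$ (so equality already holds at the origin), and the crude global bound $2(1-\cos k) \le 4$ that is decisive in the unbounded regime.

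For $0 \le k \le 1$ with $\gamma = 3/\pi$, I would use that $\sin$ is increasing on this interval, so $h'(k) = 3/\pi - 2\sin k$ switches sign at most once, from positive to negative; hence $h$ rises then falls and its minimum is attained at an endpoint. Since $h(0) = 0$ and a direct evaluation gives $h(1) = 3/\pi - 2 + 2\cos 1 > 0$, I conclude $h \ge 0$ on $[0,1]$. For $1 \le k \le 4$ with $\gamma = 5/\pi$ the structure is similar but the binding critical point is interior: solving $\sin k = 5/(2\pi)$ on $[1,4]$ gives a single root $k_\ast = \pi - \arcsin(5/(2\pi)) \approx 2.22$, and checking the signs $h'(1) < 0$, $h'(4) > 0$ (using that $\sin k < 0$ on $(\pi,4)$) shows $h$ decreases on $[1,k_\ast]$ and increases on $[k_\ast,4]$. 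The minimum $h(k_\ast)$ is then shown positive by direct evaluation. This is the tightest point of the whole lemma, because $2(1-\cos k)$ attains its maximum $4$ at $k=\pi$, and $\gamma = 5/\pi$ is chosen exactly so that $(5/\pi)\pi = 5 > 4$.

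The genuinely delicate regime is $k \ge 4$, where $h$ oscillates infinitely often and inspecting a single critical point is not enough. Here the plan is to sidestep a direct study of $h$ and instead lean on $2(1-\cos k) \le 4$: as soon as $(3/\pi)k \ge 4$, i.e.\ $k \ge 4\pi/3 \approx 4.19$, the inequality is immediate. This leaves only the bounded interval $[4,\,4\pi/3]$, which lies inside $(\pi,2\pi)$ where $\cos$ is increasing, so $2(1-\cos k)$ is decreasing and bounded above by its value $2(1-\cos 4) \approx 3.31$ at the left endpoint, while $(3/\pi)k \ge (3/\pi)\cdot 4 \approx 3.82$ throughout; comparing these closes the gap.

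I expect the large-$k$ range to be the main obstacle, precisely because of the infinitely many oscillations. The resolution is the observation that the linear right-hand side eventually dominates the bounded left-hand side, which collapses an a priori infinite verification into a single finite interval. The remaining work in all three regimes is elementary single-variable calculus—locating the critical points of $h$ and evaluating $h$ at a handful of points—so I would only record the numerical checks at $k=1$, $k=k_\ast$, and $k=4$ rather than belabor them.
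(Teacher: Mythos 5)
The paper states Lemma~\ref{lem:gammaBound} without proof---the only accompanying text is the remark that the bound is a gross overestimate when $t E_n/\hbar > 2\pi$---so there is no authors' argument to compare yours against; your proposal supplies the missing verification, and it is correct. Writing $k = tE_n/\hbar$ and $h(k) = \gamma k - 2(1-\cos k)$, each regime checks out: on $[0,1]$ with $\gamma = 3/\pi$, $h'(k) = 3/\pi - 2\sin k$ changes sign once from positive to negative, so $\min h = \min\{h(0),h(1)\}$, and $h(1) = 3/\pi - 2 + 2\cos 1 \approx 0.036 > 0$; on $[1,4]$ with $\gamma = 5/\pi$, the unique critical point is $k_\ast = \pi - \arcsin(5/(2\pi)) \approx 2.22$, it is a minimum since $h'(1) \approx -0.09 < 0 < h'(4)$, and $h(k_\ast) = (5/\pi)k_\ast - 2 - 2\sqrt{1 - 25/(4\pi^2)} \approx 0.32 > 0$ (using $\cos k_\ast = -\sqrt{1 - 25/(4\pi^2)}$ makes this evaluation exact rather than merely numerical, which you should do when you write it up); and for $k \geq 4$, the reduction to the finite window $[4, 4\pi/3]$ via the crude bound $2(1-\cos k) \leq 4$, followed by monotonicity of $\cos$ on $(\pi,2\pi)$ to get $2(1-\cos k) \leq 2(1-\cos 4) \approx 3.31 < 12/\pi \approx 3.82 \leq (3/\pi)k$, is exactly the right way to collapse the infinitely many oscillations into a finite check. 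Your boundary bookkeeping is also consistent: the endpoints $k=1$ and $k=4$, where the lemma assigns the smaller value $\gamma = 3/\pi$, are covered by your first and third regimes respectively. One cosmetic correction: the tightest point of the whole lemma is $k=1$ (slack $\approx 0.036$), not $k_\ast$ (slack $\approx 0.32$), though nothing in your argument depends on this remark.
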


\section{A bound for two states}

We now first consider the case were we are given just two states, $\rho_0$ and $\rho_1$. Here, we will consider the most general problem
where $p_0$ and $p_1$ can be arbitrary.

\subsection{A bound in terms of the trace distance}

First of all, note that even for a general distribution $\{p_x\}_x$ the problem of distinguishing two states is easy to analyze~\cite{helstrom:detection}.
In particular, we have that in the time-unlimited case for measurement operators acting directly on the encoding space
\begin{align}\label{eq:unequalProb}
	P_{\rm guess}(X|E) &= \max_{M_0,M_1} p_0 \tr(M_0 \rho_0) + p_1 \tr(M_1 \rho_1)\\
	&= p_0 + \max_{M_1} \tr\left(M_1(p_1 \rho_1 - p_0 \rho_0)\right)\\
	&= p_0 + \Delta(p_1 \rho_1, p_0 \rho_0)\ ,
\end{align}
where $\Delta(p_1 \rho_1, p_0 \rho_0)$
is given by 
\begin{align}
	\Delta(p_1 \rho_1,p_0 \rho_0) &= \max_{0 \leq P \leq \id} \tr\left(PA\right)\\
	&=\tr(A^+)\ , \label{eq:DeltaTrace}
\end{align}
where $A := p_1 \rho_1 - p_0 \rho_0$ with diagonalization $A = \sum_j \lambda_j \proj{u_j}$ and
$A^+ = \sum_{j,\lambda_j \geq 0}\proj{u_j}$ 
(Note that $\Delta$ is not symmetric here and hence formally does not form
a distance measure.) Similarly, we have
\begin{align}
	\Delta(p_0\rho_0,p_1\rho_1) = \tr(A-)\ .
\end{align}
Note that for the time unlimited case, we could have equivalently expressed
the success probability
as
\begin{align}
	P_{\rm guess}(X|E) = p_1 + \Delta(p_0\rho_0,p_1\rho_1)\ .
\end{align}
It will also be useful to note that for $\tilde{\rho}_x = \rho_x \otimes \pinit$,
\begin{align}\label{eq:ancillaDoesntChangeDelta}
\Delta(p_1 \rho_1,p_0 \rho_0) 
= \Delta(p_1 \tilde{\rho}_1,p_0 \tilde{\rho}_0)\ .
\end{align}

Before stating our bound, let us introduce some additional notation. For two states, define
\begin{align}
	\xmin := 1 - \xmax\ .
\end{align}
We now first relate the problem of discriminating the two states in time $t$
to the original success probability.

\begin{lemma}\label{lem:proto2}
	The probability of distinguishing $\rho_0$ and $\rho_1$ given with probabilities $p_0$ and $p_1$ 
	using the Hamiltonian $H = \sum_n E_n \proj{E_n} \geq 0$ is bounded by
	\begin{align}
		&P_{\rm succ}(X|E)_{H,t}
		\leq \pmax + \\
		&\qquad 2(1 - \cos(t C_{\rm max}/\hbar)) \Delta(\pmin \rhomin,\pmax \rhomax)\ ,\nonumber
	\end{align}
	where $C_{\rm max} = \argmax_{E_n} (1 - \cos(t E_n/\hbar))$.
\end{lemma}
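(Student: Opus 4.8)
The plan is to collapse the two–outcome success probability into a single trace of the form $\tr(\Wmin\tilde{B})$ for a suitable Hermitian operator $B$, and then feed this directly into Lemma~\ref{lem:basicBound} together with the trace formula~\eqref{eq:DeltaTrace} for $\Delta$. The decisive choice is to expand the measurement around the operator $M_{\xmin}$ attached to the \emph{less} likely symbol, rather than around $M_0$ or $M_1$ chosen arbitrarily; this is exactly what routes the argument to $\Delta(\pmin\rhomin,\pmax\rhomax)$ instead of to the full trace norm.

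First I would use $M_{\xmax}=\id-M_{\xmin}$ and $\tr(\tilde\rho_{\xmax})=1$ to write
\begin{align}
P_{\rm succ}(X|E)_{H,t}=\pmax+\tr\left(M_{\xmin}\left(\pmin\tilde\rho_{\xmin}-\pmax\tilde\rho_{\xmax}\right)\right)\ .
\end{align}
Substituting $M_{\xmin}=\id\otimes\Pmin+\Wmin$ from~\eqref{eq:measurementOperators}, the cross term involving $\id\otimes\Pmin$ vanishes: since $\tilde\rho_x=\rho_x\otimes\pinit$ and $\Pmin\pinit=\proj{\xmin}\,\proj{\xmax}=0$ because $\xmin\neq\xmax$, each contribution factorizes as $\tr(\rho_x)\,\tr(\Pmin\pinit)=0$. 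This leaves
\begin{align}
P_{\rm succ}(X|E)_{H,t}=\pmax+\tr\left(\Wmin\tilde{B}\right)\ ,
\end{align}
where $\tilde{B}=B\otimes\pinit$ and $B:=\pmin\rhomin-\pmax\rhomax$.

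It then remains to bound $\tr(\Wmin\tilde{B})$, for which I would apply Lemma~\ref{lem:basicBound} with $A=B$ and index $x=\xmin$. Because $\xmin\neq\xmax$, the lemma takes its ``otherwise'' branch $\tilde{B}_{\xmin}=B^+\otimes\pinit$, giving
\begin{align}
\tr\left(\Wmin\tilde{B}\right)\leq 2\sum_n\left(1-\cos(tE_n/\hbar)\right)\bra{E_n}\left(B^+\otimes\pinit\right)\ket{E_n}\ .
\end{align}
Since $B^+\otimes\pinit\geq 0$ every summand is nonnegative, so I may replace each factor $\left(1-\cos(tE_n/\hbar)\right)$ by its maximum $\left(1-\cos(tC_{\rm max}/\hbar)\right)$ over the spectrum, pull it out, and invoke completeness of the eigenbasis $\{\ket{E_n}\}$ of $H$ to collapse $\sum_n\bra{E_n}(B^+\otimes\pinit)\ket{E_n}=\tr(B^+\otimes\pinit)=\tr(B^+)$. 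Finally $\tr(B^+)=\Delta(\pmin\rhomin,\pmax\rhomax)$ by~\eqref{eq:DeltaTrace}, which is precisely the stated bound.

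The step demanding the most care is the reduction: one must expand around $M_{\xmin}$ so that \emph{both} the $\id\otimes\Pmin$ cross term disappears \emph{and} Lemma~\ref{lem:basicBound} lands in its ``otherwise'' case, producing $B^+$ rather than $|B|$. Expanding around the wrong label would leave a nonvanishing cross term and would force the weaker replacement of $\Delta$ by the full trace norm $\tr|B|$. Everything else is routine bookkeeping, namely the positivity $B^+\otimes\pinit\geq 0$ that legitimizes the monotone replacement of the cosine factor, and the identification $\tr(B^+)=\Delta(\pmin\rhomin,\pmax\rhomax)$.
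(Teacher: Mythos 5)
Your proposal is correct and follows essentially the same route as the paper: isolate $\pmax + \tr\bigl(\Wmin(\pmin\tilde{\rho}_{\xmin}-\pmax\tilde{\rho}_{\xmax})\bigr)$ using $\Pmin\pinit=0$, apply Lemma~\ref{lem:basicBound} with $A=\pmin\rhomin-\pmax\rhomax$ at $x=\xmin$ (the ``otherwise'' branch giving $A^+$), and collapse the sum via~\eqref{eq:DeltaTrace} and~\eqref{eq:ancillaDoesntChangeDelta}. Your added remarks---why the cross term vanishes, why expanding around $\xmin$ is essential, and why positivity licenses pulling out the maximal cosine factor---are exactly the implicit steps of the paper's argument made explicit.
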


\begin{proof}
	Using~\eqref{eq:unequalProb},~\eqref{eq:measurementOperators} and the fact that $\Pmin\init = 0$
	we may bound the success probability as 
	\begin{align}\label{eq:rewriteProb}
		&P_{\rm succ}(X|E)_{H,t} \leq \pmax +\\
		&\qquad \tr\left(\Wmin(\pmin \tilde{\rho}_{\xmin} - \pmax \tilde{\rho}_{\xmax})\right)\ .\nonumber
	\end{align}
	Applying Lemma~\ref{lem:basicBound} for $A = \pmin \rhomin - \pmax \rhomax$ we have that
	\begin{align}
		&\tr\left(\Wmin(A\otimes \pinit)\right)\\
		&\leq 2 \sum_n (1 - \cos(t E_n/\hbar)) \bra{E_n}A^+ \otimes \pinit\ket{E_n}\ .\nonumber
	\end{align}
	Hence from~\eqref{eq:DeltaTrace} and~\eqref{eq:ancillaDoesntChangeDelta} we have
	\begin{align}
		&\tr\left(\Wmin(\pmin \tilde{\rho}_{\xmin} - \pmax \tilde{\rho}_{\xmax})\right)\\
		&\leq 2(1-\cos(t C_{\rm max}/\hbar)) \Delta(\pmin \rhomin,\pmax \rhomax)\ . \nonumber
	\end{align}
	Our claim now follows by plugging this bound into~\eqref{eq:rewriteProb}.
\end{proof}

With the help of Lemma~\ref{lem:gammaBound} one may now also use the fact that $\forall E_n, E_n \leq \|H\|_\infty$ to 
obtain a very simple bound in terms of the spectrum of the Hamiltonian.
\begin{corollary}\label{cor:traceDist}
	The probability of distinguishing $\rho_0$ and $\rho_1$ given with probabilities $p_0$ and $p_1$ 
	using the Hamiltonian $H = \sum_n E_n \proj{E_n} \geq 0$ is bounded by
	\begin{align}
		&P_{\rm succ}(X|E)_{H,t} \leq \pmax +\\
		&\qquad \frac{\gamma t \|H\|_\infty \Delta(\pmin \rhomin, \pmax \rhomax)}{\hbar}\ .\nonumber
	\end{align}
\end{corollary}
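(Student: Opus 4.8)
The plan is to obtain the corollary directly by composing the two results already established, namely Lemma~\ref{lem:proto2} and Lemma~\ref{lem:gammaBound}, so that essentially no new work is needed. First I would invoke Lemma~\ref{lem:proto2}, which already yields $P_{\rm succ}(X|E)_{H,t} \leq \pmax + 2(1-\cos(t C_{\rm max}/\hbar))\,\Delta(\pmin\rhomin,\pmax\rhomax)$, where $C_{\rm max} = \argmax_{E_n}(1-\cos(t E_n/\hbar))$ is by construction one of the eigenvalues of $H$. All of the genuinely nontrivial content — the reduction to $\Delta$, the positivity of the map $R\cdot R^\dagger$ used in Lemma~\ref{lem:basicBound}, and the bookkeeping around the initial ancilla state via $\Pmin\init = 0$ — is therefore already absorbed into that lemma. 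What remains is purely to replace the oscillatory prefactor $2(1-\cos(t C_{\rm max}/\hbar))$ by a quantity linear in the energy.

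The key step is then to apply Lemma~\ref{lem:gammaBound} to the single eigenvalue $E_n = C_{\rm max}$. Since $H \geq 0$ we have $C_{\rm max} \geq 0$, so the hypothesis of the lemma is met and we obtain $2(1-\cos(t C_{\rm max}/\hbar)) \leq \gamma t C_{\rm max}/\hbar$ with the stated piecewise-constant $\gamma$. Because $C_{\rm max}$ lies in the spectrum of $H$, it obeys $C_{\rm max} \leq \|H\|_\infty$, and since $\Delta(\pmin\rhomin,\pmax\rhomax) = \tr(A^+) \geq 0$ is manifestly nonnegative by~\eqref{eq:DeltaTrace}, multiplying through and substituting $C_{\rm max} \leq \|H\|_\infty$ preserves the inequality and gives exactly $\pmax + \gamma t \|H\|_\infty \Delta(\pmin\rhomin,\pmax\rhomax)/\hbar$, as claimed.

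Since the argument is a mechanical chaining of the two lemmas, there is no real obstacle; the only points requiring care are bookkeeping ones. I must verify that the inequality direction is maintained at each substitution, which is guaranteed precisely by the nonnegativity of $\Delta$, and that $C_{\rm max}$ genuinely satisfies $0 \leq C_{\rm max} \leq \|H\|_\infty$, which follows from its definition as the eigenvalue maximizing $1-\cos(t E_n/\hbar)$. I would also note explicitly that the constant $\gamma$ inherited here is the one evaluated at $t C_{\rm max}/\hbar$, matching the piecewise definition appearing in the statement.
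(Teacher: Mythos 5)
Your proposal is correct and follows exactly the paper's route: the paper obtains Corollary~\ref{cor:traceDist} by the same chaining of Lemma~\ref{lem:proto2} with Lemma~\ref{lem:gammaBound} together with the observation that every eigenvalue (in particular $C_{\rm max}$) is at most $\|H\|_\infty$. Your extra bookkeeping remarks (nonnegativity of $\Delta$ and $0 \leq C_{\rm max} \leq \|H\|_\infty$) are exactly the implicit steps the paper glosses over.
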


\subsection{A bound in terms of the average energy}

Inspecting the proof above with Lemma~\ref{lem:gammaBound} in mind, it is indeed easy to see that we can also obtain a bound in terms
of average energies. We first derive a somewhat stronger bound for two equiprobable states that actually depends on the ``average energy'' of a function of both states.

\begin{theorem}\label{thm:avgEnergyTwoStates}
	The probability of distinguishing $\rho_0$ and $\rho_1$ given with probabilities $p_0$ and $p_1$ using the Hamiltonian $H = \sum_n E_n \proj{E_n} \geq 0$ in time $t$ is bounded by
	\begin{align}
		&P_{\rm succ}(X|E)_{H,t}
		\leq \pmax +\\
		&\qquad\frac{\gamma t}{2\hbar} \left[\tr(H|\pmin \tilde{\rho}_{\xmin}- \pmax\tilde{\rho}_{\xmax}|)\right. + \nonumber\\
		&\qquad \left. \pmin \tr(H\tilde{\rho}_{\xmin}) - \pmax \tr(H\tilde{\rho}_{\xmax})\right]\ ,\nonumber
	\end{align}
	where
	\begin{align}
		\gamma := \left\{ \begin{array}{ll}5/\pi & \mbox{if } 1 < t E_n/\hbar < 4 \ ,\\
			3/\pi & \mbox{otherwise} \ .
		\end{array}\right.
	\end{align}
\end{theorem}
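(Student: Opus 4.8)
The plan is to treat this as a direct generalization of the equiprobable calculation carried out around~\eqref{eq:avg2States}, feeding the machinery already built in Lemma~\ref{lem:proto2} and Lemma~\ref{lem:basicBound} through the linearization of Lemma~\ref{lem:gammaBound}. The only genuinely new bookkeeping is to keep the probability weights $\pmin,\pmax$ attached to the states throughout, and then to split the positive part of the relevant operator into its symmetric and antisymmetric pieces. No new inequalities are needed beyond what the appendix already supplies.

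First I would start from the intermediate inequality established inside the proof of Lemma~\ref{lem:proto2}, rather than from its final statement. Setting $A := \pmin\rhomin - \pmax\rhomax$ and $\tilde{A}^+ := A^+\otimes\pinit$, the combination of~\eqref{eq:rewriteProb} with Lemma~\ref{lem:basicBound} (applied with $x=\xmin\neq\xmax$, so that $\tilde{A}_x = A^+\otimes\pinit$) gives
\begin{align}
P_{\rm succ}(X|E)_{H,t} \leq \pmax + 2\sum_n (1-\cos(t E_n/\hbar))\bra{E_n}\tilde{A}^+\ket{E_n}\ .
\end{align}
Because $\tilde{A}^+\geq 0$, every diagonal weight $\bra{E_n}\tilde{A}^+\ket{E_n}$ is nonnegative, so I can apply Lemma~\ref{lem:gammaBound} term by term: multiplying each instance of $2(1-\cos(t E_n/\hbar))\leq \gamma\,t E_n/\hbar$ by its nonnegative weight preserves the inequality. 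Recognizing $\sum_n E_n\bra{E_n}\tilde{A}^+\ket{E_n}=\tr(H\tilde{A}^+)$ then yields
\begin{align}
P_{\rm succ}(X|E)_{H,t}\leq \pmax + \frac{\gamma t}{\hbar}\tr(H\tilde{A}^+)\ .
\end{align}

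Next I would decompose the positive part symmetrically, $\tilde{A}^+ = \tfrac12|\tilde{A}| + \tfrac12\tilde{A}$, exactly as in the main-text derivation of~\eqref{eq:avg2States}. Since $\pinit=\pinit$ is a rank-one projector we have $|\tilde{A}| = |A|\otimes\pinit = |\pmin\tilde{\rho}_{\xmin}-\pmax\tilde{\rho}_{\xmax}|$, while $\tilde{A} = \pmin\tilde{\rho}_{\xmin}-\pmax\tilde{\rho}_{\xmax}$. Substituting these into $\tr(H\tilde{A}^+)=\tfrac12\tr(H|\tilde{A}|)+\tfrac12\tr(H\tilde{A})$ reproduces precisely the two bracketed terms of the claimed bound, with the prefactor $\tfrac{\gamma t}{2\hbar}$, completing the argument.

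The only delicate point is the bookkeeping of the constant $\gamma$: Lemma~\ref{lem:gammaBound} supplies a possibly different value ($3/\pi$ or $5/\pi$) for each energy $E_n$, depending on whether $t E_n/\hbar$ lies in $(1,4)$, whereas the theorem quotes a single global $\gamma$. To justify this I would simply use $5/\pi\geq 3/\pi$, so the larger value dominates every per-term constant; one takes $\gamma=5/\pi$ whenever some relevant $E_n$ falls in the window and $\gamma=3/\pi$ otherwise. Everything else is routine, since the ancilla tensor factor $\pinit$ commutes through all traces and never disturbs the energy weighting carried by $H$.
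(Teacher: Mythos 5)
Your proposal is correct and follows essentially the same route as the paper's own proof: starting from~\eqref{eq:rewriteProb}, applying Lemma~\ref{lem:basicBound} with $A = \pmin \rhomin - \pmax \rhomax$, linearizing via Lemma~\ref{lem:gammaBound} to get $\pmax + \frac{\gamma t}{\hbar}\tr(H\tilde{A}^+)$, and then decomposing $\tilde{A}^+ = \frac{1}{2}|\tilde{A}| + \frac{1}{2}\tilde{A}$. Your explicit treatment of the per-$E_n$ constant (taking the dominating value $5/\pi$ whenever some $t E_n/\hbar$ lies in $(1,4)$) is, if anything, slightly more careful than the paper's own statement of $\gamma$.
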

\begin{proof}
	Recall that by applying Lemma~\ref{lem:basicBound} with $A = \pmin \rhomin - \pmax \rhomax$ we have that
	\begin{align}
		&\tr\left(\Wmin(\pmin \tilde{\rho}_{\xmin} - \pmax \tilde{\rho}_{\xmax})\right) \\
		&\leq 2 \sum_n (1 - \cos(t E_n/\hbar)) \bra{E_n}A^+ \otimes \pinit\ket{E_n}\ .\nonumber
	\end{align}
	We may now use Lemma~\ref{lem:gammaBound} to obtain
	\begin{align}
		&\tr\left(\Wmin(\pmin \tilde{\rho}_{\xmin} - \pmax \tilde{\rho}_{\xmax})\right) \\
		&\leq \frac{\gamma t}{\hbar} \sum_n E_n \bra{E_n}A^+ \otimes \pinit\ket{E_n}\nonumber\\
		&= \frac{\gamma t \tr(H\tilde{A}^+)}{\hbar}\ .
	\end{align}
	Our claim now follows by noting that 
	\begin{align}
		\tilde{A}^+ &= \frac{1}{2}\left(\tilde{A}^+ + \tilde{A}^{-}\right) + \frac{1}{2}\left(\tilde{A}^+ - \tilde{A}^-\right)\\
		&=\frac{1}{2}\left|\pmin \tilde{\rho}_{\xmin} - \pmax \tilde{\rho}_{\xmax}\right| +\\
		&\qquad\frac{1}{2}\left(\pmin \tilde{\rho}_{\xmin} - \pmax \tilde{\rho}_{\xmax}\right)\ .\nonumber
	\end{align}
\end{proof}

\section{A bound for many input states}

Finally, we derive a bound for the most general case of distinguishing states $\rho_0,\ldots,\rho_{N-1}$ where we are given $\rho_x$ with 
probability $p_x$.

\begin{theorem}\label{thm:manyStates}
	Suppose $H \geq 0$. Then the probability of distinguishing $\rho_0,\ldots,\rho_{N-1}$ given with probabilities
	$p_0,\ldots,p_{N-1}$ obeys
	\begin{align}
		P_{\rm succ}(X|E)_{H,t}
		&\leq \pmax +\frac{\hat{\gamma}t}{\hbar} \sum_{x=0}^{N-1} p_x \tr\left(H\tilde{\rho}_x\right)\ ,
	\end{align}
	where
	\begin{align}
		\hat{\gamma} := \left\{ \begin{array}{ll}5/\pi & \mbox{if } \forall E_n, 1 < t E_n/\hbar < 4\\
			3/\pi & \mbox{otherwise}
		\end{array}\right. \ .
	\end{align}
\end{theorem}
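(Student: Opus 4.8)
The plan is to follow the same skeleton as the two-state arguments (Lemma~\ref{lem:proto2} and Theorem~\ref{thm:avgEnergyTwoStates}), but to sidestep any use of the optimal distinguishing measurement by exploiting the positivity of the individual states $\rho_x$. First I would split the success probability, using $M_x = \id \otimes P_x + W_x$ from~\eqref{eq:measurementOperators}, as
\begin{align}
	P_{\rm succ}(X|E)_{H,t} = \sum_{x=0}^{N-1} p_x \tr\left((\id \otimes P_x)\tilde{\rho}_x\right) + \sum_{x=0}^{N-1} p_x \tr\left(W_x \tilde{\rho}_x\right)\ ,
\end{align}
and compute the first sum exactly. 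Since $\tilde{\rho}_x = \rho_x \otimes \pinit$, with $\pinit$ the projector onto the initial ancilla state $\init$, and $P_x = \proj{x}$, orthonormality of the computational basis gives $\tr\left((\id \otimes P_x)\tilde{\rho}_x\right) = \tr(\rho_x)\,\tr(P_x \pinit) = \delta_{x,\xmax}$, so the first sum collapses to exactly $\pmax$. This is precisely the leading term of the claimed bound and reflects the distinguisher's apriori guess.

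Second, I would bound $\sum_x p_x \tr(W_x\tilde{\rho}_x)$ by applying Lemma~\ref{lem:basicBound} separately to each $x$ with the Hermitian choice $A = \rho_x$. The decisive simplification is that $\rho_x \geq 0$, so its negative part vanishes and $A^+ = |A| = \rho_x$; consequently the two cases of $\tilde{A}_x$ in Lemma~\ref{lem:basicBound} coincide and we obtain, uniformly in $x$,
\begin{align}
	\tr\left(W_x \tilde{\rho}_x\right) \leq 2\sum_n (1 - \cos(tE_n/\hbar))\,\bra{E_n}\tilde{\rho}_x\ket{E_n}\ .
\end{align}
Erasing the asymmetry between the label $\xmax$ and the remaining labels is exactly what lets the two-state reasoning extend to all $N$ states simultaneously: unlike in Theorem~\ref{thm:avgEnergyTwoStates}, where $A$ was a difference of states carrying a genuine negative part, here each summand involves only a single positive state, so no knowledge of the optimal measurement is needed and the result is phrased directly in terms of the per-state average energies rather than energy-weighted trace distances.

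Finally, I would invoke Lemma~\ref{lem:gammaBound} to replace the oscillatory factor by an energy, $2(1-\cos(tE_n/\hbar)) \leq \hat{\gamma}\,tE_n/\hbar$, with $\hat{\gamma}$ taken as the regime-dependent constant valid simultaneously across the whole spectrum of $H$. Substituting and using $\sum_n E_n \bra{E_n}\tilde{\rho}_x\ket{E_n} = \tr(H\tilde{\rho}_x)$ gives $\tr(W_x\tilde{\rho}_x) \leq (\hat{\gamma}t/\hbar)\,\tr(H\tilde{\rho}_x)$ for each $x$; weighting by $p_x$, summing, and adding the $\pmax$ from the first step yields the claim. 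I do not expect a substantial obstacle, since the argument is a direct chaining of the two lemmas together with the positivity observation; the only delicate point is selecting a single constant $\hat{\gamma}$ good for every $E_n$ at once, which is why the case distinction turns on whether all energies $tE_n/\hbar$ fall in the window $(1,4)$ where the smaller constant $3/\pi$ no longer suffices and one is forced to use $5/\pi$.
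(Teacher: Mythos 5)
Your proposal is correct and follows essentially the same route as the paper's own proof: splitting $M_x = \id\otimes P_x + W_x$, observing that the identity part contributes exactly $\pmax$, applying Lemma~\ref{lem:basicBound} with $A=\rho_x$ (where positivity makes the $\xmax$ and non-$\xmax$ cases coincide), and finishing with Lemma~\ref{lem:gammaBound}. No gaps; this matches the argument in the appendix step for step.
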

\begin{proof}
Note that the success probability for a particular interaction $H$ is now given by
\begin{align}\label{eq:pSUCCGeneral}
	&P_{\rm succ}(X|E)_{H,t} = \sum_{x=0}^{N-1} p_x \tr\left(M_x \tilde{\rho}_x\right)\ ,
\end{align}
where
\begin{align}
\tr\left(M_x \tilde{\rho}_x\right)=
\tr\left( ( \id \otimes P_x) \tilde{\rho}_x\right) + \tr\left(W_x \tilde{\rho}_x\right)\ .
\end{align}
Let us now first consider the case of $x = \xmax$. 
We have that
\begin{align}
\tr\left((\id \otimes P_x)\tilde{\rho}_x\right) = \tr(\rho_x) = 1\ .
\end{align}
Using Lemma~\ref{lem:basicBound} with $A = \rho_x$ we hence have that
\begin{align}
&\tr\left(M_x \tilde{\rho}_x\right) \leq\\
&\qquad 1 + 2  \sum_n (1-\cos(t E_n/\hbar)) \bra{E_n}\tilde{\rho}_x\ket{E_n}\nonumber\ .
\end{align}
We now turn to the case of $x \neq \xmax$.
Since $P_x\pinit = \pinit P_x = 0$ and $\tilde{\rho}_x = \rho_x \otimes \pinit$ we have
$(\id \otimes P_x)\tilde{\rho}_x = 0$ for all $x \neq \xmax$.
	Again by applying~\ref{lem:basicBound} with $A = \rho_x$ we obtain
	from $\rho_x \geq 0$ that
	\begin{align}
		\tr\left(W_x \tilde{\rho}_x\right) \leq 2 \sum_n (1-\cos(t E_n/\hbar)) \bra{E_n}\tilde{\rho}_x\ket{E_n}\ .
	\end{align}
	Our claim now follows by using Lemma~\ref{lem:gammaBound} to obtain $2(1 - \cos(t E_n/\hbar)) \leq \gamma t E_n/\hbar$
	for $E_n \geq 0$.
\end{proof}


\section{Attaining the bound}

We now exhibit a Hamiltonian that achieves our upper bound for the success probability of distinguishing two states given with apriori equal probability.

\begin{theorem}\label{thm:attaining}
	Suppose we are given $\rho_0$ and $\rho_1$ with apriori equal probability. Let $E_{\max} \geq 0$. 
	Then there exist a Hamiltonian $H$ with $\|H\|_\infty = E_{\max}$ that in time $t$ achieves success probability
	\begin{align}
		P_{\rm succ}(X|E)_{H,t} = \frac{1}{2} + \frac{1}{4}(1-\cos(t\|H\|_\infty /\hbar)) D(\rho_0,\rho_1)\ .
	\end{align}
	In particular, we can distinguish the two states perfectly in time $t = \hbar \pi/E_{\max}$.
\end{theorem}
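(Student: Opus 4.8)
The plan is to verify by direct computation that the Hamiltonian $H := E_{\max}(\hat H + \id)/2$ constructed in the main text attains the claimed success probability, where $\hat H = \Pi_{A^-}\otimes\id + \Pi_{A^+}\otimes(\outp 01 + \outp 10)$ and $A = \rho_1-\rho_0$. The key opening observation is that $\hat H$ is simultaneously Hermitian and unitary, so $\hat H^2 = \id$. Writing $\theta := t E_{\max}/(2\hbar)$, this gives $U = \exp(-iHt/\hbar) = e^{-i\theta}\exp(-i\theta\hat H) = e^{-i\theta}(\cos\theta\,\id - i\sin\theta\,\hat H)$. Since the global phase $e^{-i\theta}$ cancels in $M_x = U^\dagger(\id\otimes P_x)U$, I may replace $U$ by $V := \cos\theta\,\id - i\sin\theta\,\hat H$ throughout.

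The next step exploits the block structure of $\hat H$: on the negative eigenspace of $A$ it acts as the identity on the ancilla, while on the positive eigenspace it acts as the flip $X := \outp 01 + \outp 10$. Hence $V = \Pi_{A^-}\otimes e^{-i\theta}\id + \Pi_{A^+}\otimes \exp(-i\theta X)$, so that $V$ leaves the ancilla essentially alone on the $A^-$ block and applies the partial bit-flip $\exp(-i\theta X)\ket 0 = \cos\theta\ket 0 - i\sin\theta\ket 1$ on the $A^+$ block. Applying $V$ to $\tilde\rho_x = \rho_x\otimes\proj 0$ and reading off the ancilla outcome probabilities then yields, for a general mixed input, $\tr(M_1\tilde\rho_1) = \sin^2\theta\,\tr(\Pi_{A^+}\rho_1)$ and $\tr(M_0\tilde\rho_0) = \tr(\Pi_{A^-}\rho_0) + \cos^2\theta\,\tr(\Pi_{A^+}\rho_0)$, because the $\ket 1$ component of the evolved state draws weight only from the $A^+$ block.

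Averaging the two contributions with weight $1/2$ and substituting $\cos^2\theta = 1-\sin^2\theta$ together with $\tr(\rho_0)=1$, everything collapses to
\begin{align}
P_{\rm succ}(X|E)_{H,t} = \frac12\left[1 + \sin^2\theta\,\tr\left(\Pi_{A^+}(\rho_1-\rho_0)\right)\right].
\end{align}
Since $A=\rho_1-\rho_0$ is traceless, $\tr(\Pi_{A^+}A) = \tr(A^+) = \tfrac12\|\rho_0-\rho_1\|_1 = D(\rho_0,\rho_1)$, and using $\sin^2\theta = (1-\cos 2\theta)/2$ with $2\theta = t\|H\|_\infty/\hbar$ reproduces the stated formula exactly. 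The ``in particular'' claim then follows by taking $t = \hbar\pi/E_{\max}$, so that $\cos(t\|H\|_\infty/\hbar) = -1$ and $P_{\rm succ} = \tfrac12 + \tfrac12 D(\rho_0,\rho_1)$, the Helstrom optimum. I anticipate no genuine obstacle: the three points that need care are the cancellation of the global phase, the fact that only the $A^+$ block feeds the $\ket 1$ outcome, and the identification $\tr(A^+)=D(\rho_0,\rho_1)$ via the tracelessness of $A$; everything else is routine bookkeeping.
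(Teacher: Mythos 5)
Your proposal is correct and follows essentially the same route as the paper's own proof: the identical Hamiltonian $H = E_{\max}(\hat H+\id)/2$, the same use of $\hat H^2 = \id$ to get $U \propto \cos\theta\,\id - i\sin\theta\,\hat H$, the same orthogonality $\Pi_{A^+}\Pi_{A^-}=0$, and the same identification $\tr(A^+)=D(\rho_0,\rho_1)$ via tracelessness of $A$. The only difference is bookkeeping — you evolve the states and compute $\tr(M_0\tilde\rho_0)$ and $\tr(M_1\tilde\rho_1)$ separately (Schr\"odinger picture), whereas the paper conjugates the projector and evaluates a single trace against $\tilde\rho_1-\tilde\rho_0$ (Heisenberg picture) — which changes nothing of substance.
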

\begin{proof}
	Let $A = \rho_1 - \rho_0$. We can diagonalize $A = \sum_j \lambda_j \proj{u_j}$, and define $A^+ := \sum_{j,\lambda_j \geq 0} \lambda_j \proj{u_j}$ 
	and $A^- := A - A^+$.
	Consider the operator
\begin{align}
	\hat{H} &:= \Pi_{A^-} \otimes \id 
	+\Pi_{A^+} \otimes (\outp{0}{1} + \outp{1}{0})\ ,
\end{align}
where $\Pi_{A^+}$ and $\Pi_{A^-}$ are projectors on the support of $A^+$ and $A^-$ respectively.
Clearly, $\hat{H}$ is Hermitian and unitary, and hence has eigenvalues $\pm 1$.
We now define the Hamiltonian $H$
\begin{align}
	H := E_{\max} (\hat{H} + \id)/2\ .
\end{align}
Since any term in the Hamiltonian proportional to the identity does not affect the dynamics, we may replace the time evolution operator $\exp(-itH/\hbar)$ with
\begin{align}
	U = \exp(-i t\|H\|_\infty\hat{H}/2\hbar) = \sum_{n=0}^{\infty}\frac{1}{n!} \left(\frac{-i t\|H\|_\infty\hat{H}}{2\hbar}\right)^n\ .
\end{align}
For our choice of $\hat{H}$ we have that 
\begin{align}
	(\hat{H})^n &=\left\{ \begin{array}{ll} \id &n \mbox{ even}\ ,\\
		\hat{H} &n \mbox{ odd}\ .\end{array}\right. 
\end{align}
and as a result the Taylor expansion for $U$ gives 
\begin{align}
	U&=(\cos(t\|H\|_\infty/2\hbar)\id - i \sin(t\|H\|_\infty/2\hbar)\hat{H})\ .
\end{align}
Using this unitary in our state discrimination problem, we obtain
\begin{align}
	&P_{\rm succ}(X|E)_{H,t}\\
	&= \frac{1}{2} + \frac{1}{2} \tr\left(U^\dagger (\id \otimes P_1) U(\tilde{\rho}_1 - \tilde{\rho}_0)\right)\nonumber\\
	&=\frac{1}{2} + \frac{\sin^2(t\|H\|_\infty/2\hbar)}{2} \tr\left(\hat{H}(\id\otimes P_1)\hat{H}(\tilde{\rho}_1 - \tilde{\rho}_0)\right)\ .
\end{align}
It remains to evaluate the last term.
First of all, note that
\begin{align}
	\hat{H}(\id \otimes P_1)\hat{H} &= (\Pi_{A^+} \otimes \proj{0} + \Pi_{A^-} \otimes \proj{1})\ .
\end{align}
Since $\tilde{\rho}_1 - \tilde{\rho}_0 = (A^+ - A^-) \otimes \proj{0}$ and $\Pi_{A^+}\Pi_{A^-} = \Pi_{A^-}\Pi_{A^+} = 0$, we thus have
\begin{align}
	\tr\left(\hat{H}(\id \otimes P_1)\hat{H}(\tilde{\rho}_1 - \tilde{\rho}_0)\right) &=\tr(A^+)\\
	&= D(\rho_0,\rho_1)\ .
\end{align}
The claim follows by an application of the double angle formula.
\end{proof}

\section{Constraining the eigenvalues of $H$}

For completeness, we now remind ourselves why in many settings it is not unreasonable to assume that $\|H\|_\infty$ is indeed bounded.
Note that when dealing with fixed input states $\rho_0,\ldots,\rho_{N-1}$, we can without loss of generality assume that the Hamiltonian
$H$ that leads to the optimal success probability possible within a certain time $t$ is limited to the energy eigenspace sufficient
to contain the support of the inputs states and the standard ancilla state. This holds even in an approximate sense.
To see this, consider how the state on the encoding register and the ancilla evolve in time 
\begin{align}
	\tilde{\rho}_0(t) := U(t) \tilde{\rho}_0 U(t)^\dagger\ ,\\
	\tilde{\rho}_1(t) := U(t) \tilde{\rho}_1 U(t)^\dagger\ .
\end{align}
Choose an error parameter $\eps$ and define $\eproj$ to be the lowest rank operator such that $[\eproj,H] = 0$ and 
\begin{align}\label{eq:epsClose}
	\frac{1}{2}\|\rho_0(0) - \eproj \rho_0(0) \eproj\|_1 &\leq \eps\ ,\\
	\frac{1}{2}\|\rho_1(0) - \eproj \rho_1(0) \eproj\|_1 &\leq \eps\ .
\end{align}
Since $[\eproj,H] = 0$ and the L1-norm is unitarily invariant, we can immediately conclude that~\eqref{eq:epsClose} still holds
when we replace $\rho_0(0)$ and $\rho_1(0)$ with any subsequent states $\rho_0(t)$ and $\rho_1(t)$.
However, this tells us that we can approximate $U$ with a unitary $\hat{U}$ as
\begin{align}
	\hat{H} &:= \eproj H \eproj\ ,\\
	\hat{U}(t) &:= \exp(- i \hat{H} t/\hbar)\ ,
\end{align}
without affecting any of the output states, except with a chosen error $\eps$. By the definition of the L1-norm 
we have for any Hermitian operator $A$ that
\begin{align}
	\frac{1}{2}\|A\|_1 = \sup_{-\id \leq P \leq \id}\tr(PA)\ ,
\end{align}
and hence~\eqref{eq:epsClose} implies that using the unitary $\hat{H}$ in place of any original $H$ leads to a change in success probability
in the state discrimination problem of at most $2 \eps$. 

\section{A general time-dependent Tsirelson's bound}

Let us now consider a more general version of our time-dependent Tsirelson's bound in which we drop the assumption
that the source emits a particular state, and that Alice makes a two-outcome projective measurement. The only assumption we will make
now is that Bob's Hamiltonian is bounded $\|H\|_\infty = E_{\max}$. 

For our proof, we will need the more general version of the two state discrimination problem in which the two states
are not necessarily given with equal probabilities (see Corollary~\ref{cor:traceDist}). Again, let us first briefly consider the
time unlimited case, where $M_0$ and $M_1$ are just measurements on a single system. Recall that we could express
the success probability of distinguishing $\rho_0$ and $\rho_1$ given with probabilities $p_0$ and $p_1$ respectively as
\begin{align}
	P_{\rm guess}(X|E) = p_0 + \Delta(p_1 \rho_1, p_0 \rho_0)\ .
\end{align}
At first glance, this expression appears a bit assymetric - after all, what should be so special about $p_0$? Note, however,
that by replacing $M_1 = \id - M_0$ in~\eqref{eq:unequalProb} we could also have expressed the success probability as
\begin{align}
	P_{\rm guess}(X|E)
	= p_1 + \Delta(p_0 \rho_0, p_1 \rho_1)\ . \nonumber
\end{align}
In particular, it will be convenient to note that we could have also written the success probability as the average of these two terms
\begin{align}
	P_{\rm guess}(X|E)
	= \frac{1}{2}\left(1 + \sum_{x\in\01}\Delta(p_{\bar{x}} \rho_{\bar{x}}, p_x \rho_x)\right)\ .
\end{align}
Let us now return to the time limited case, involving an interaction of the encoding and ancilla system, followed by a measurement
on the ancilla. Recall that we have from Corollary~\ref{cor:traceDist} that
\begin{align}\label{eq:timeLimitedSym}
	&P_{\rm guess}(X|E)_{H,t}\\
	&\leq p_{\xmax} + \left(\frac{t \gamma \|H\|_\infty}{\hbar}\right) \Delta(\pmin \rhomin,\pmax \rhomax)\ . \nonumber
\end{align}
Note that in the time limited case we cannot simply average -- the proof of Corollary~\ref{cor:traceDist} yields a different bound
had we placed $p_{\xmin}$ in front (a small calculation shows that it will again single out $\pmax$). 
We are now ready to show our general bound, where we will use the notation developed in Section~\ref{sec:tsirelNotation}. 

\begin{lemma}
	Let Bob's Hamiltonian be scaled such that $H \geq 0$.
	Then the maximum success probability of winning the CHSH game for Alice
	and Bob in time $t$ obeys
	\begin{align}
		p_{\rm win}^t  \leq \frac{1}{2}\left(\sum_{z\in\01} \pmax^z\right) + \frac{\gamma t \|H\|_\infty}{\sqrt{2} \hbar}\ ,
	\end{align}
where
	\begin{align}
		\gamma := \left\{ \begin{array}{ll}5/\pi & \mbox{if } 1 < t E_n/\hbar < 4 \ ,\\
			3/\pi & \mbox{otherwise} \ .
		\end{array}\right.
	\end{align}
\end{lemma}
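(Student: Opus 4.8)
The plan is to reduce Bob's two state-discrimination problems (one for each value of Alice's question $z \in \{0,1\}$) to the general two-state bound already established in Corollary~\ref{cor:traceDist}, and then to connect the resulting $\Delta$-terms back to the ordinary (time-unlimited) Tsirelson bound~\eqref{eq:tsirelBound}. Recall from Section~\ref{sec:tsirelNotation} that
\begin{align}
	p_{\rm win}^t = \frac{1}{2}\sum_{z\in\01} P_{\rm guess}(X^z|E^z)_{H,t}\ ,
\end{align}
so it suffices to control each of Bob's two discrimination problems separately and then average over $z$.

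First I would apply Corollary~\ref{cor:traceDist} to each of the two discrimination problems. For each $z$, Bob must distinguish $\sigma_0^z$ and $\sigma_1^z$ given with probabilities $p_0^z$ and $p_1^z$, so the corollary yields
\begin{align}
	P_{\rm guess}(X^z|E^z)_{H,t} \leq \pmax^z + \frac{\gamma t \|H\|_\infty}{\hbar}\, \Delta(\pmin^z \sigma_{\xmin}^z, \pmax^z \sigma_{\xmax}^z)\ .
\end{align}
Averaging over $z$ and using the trivial bound $\Delta(\pmin^z \sigma_{\xmin}^z,\pmax^z \sigma_{\xmax}^z) \leq \Delta(\pmin^z \sigma_{\xmin}^z,\pmax^z \sigma_{\xmax}^z)$ in each term, the claim reduces to showing that the averaged $\Delta$-quantity is at most $1/\sqrt{2}$, i.e.
\begin{align}\label{eq:deltasum}
	\frac{1}{2}\sum_{z\in\01} \Delta(\pmin^z \sigma_{\xmin}^z, \pmax^z \sigma_{\xmax}^z) \leq \frac{1}{\sqrt{2}}\ .
\end{align}

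The key step is to derive~\eqref{eq:deltasum} from the time-unlimited Tsirelson bound. I would use the symmetric expression for the time-unlimited guessing probability recorded just above the statement, namely $P_{\rm guess}(X^z|E^z) = \tfrac12\big(1 + \sum_{x}\Delta(p_{\bar x}^z \sigma_{\bar x}^z, p_x^z \sigma_x^z)\big)$, and the fact that each $\Delta(\pmin^z \sigma_{\xmin}^z,\pmax^z \sigma_{\xmax}^z)$ is bounded above by the larger of the two $\Delta$-terms appearing there, hence by $P_{\rm guess}(X^z|E^z) - \tfrac12 \leq P_{\rm guess}(X^z|E^z) - \pmin^z$ in a suitable accounting. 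Summing Tsirelson's bound $\tfrac12\sum_z P_{\rm guess}(X^z|E^z)\le \tfrac12 + \tfrac1{2\sqrt2}$ then forces the averaged $\Delta$-term down to $1/\sqrt2$, giving~\eqref{eq:deltasum} and therefore
\begin{align}
	p_{\rm win}^t \leq \frac{1}{2}\Big(\sum_{z\in\01}\pmax^z\Big) + \frac{\gamma t \|H\|_\infty}{\sqrt{2}\,\hbar}\ .
\end{align}

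The main obstacle I anticipate is the asymmetry flagged in the remark preceding the lemma: in the time-limited setting one cannot freely average the bound over which state plays the role of $\rhomax$, since Corollary~\ref{cor:traceDist} singles out $\pmax^z$ rather than treating the two labels symmetrically. I would therefore have to verify carefully that the $\Delta(\pmin^z \sigma_{\xmin}^z, \pmax^z \sigma_{\xmax}^z)$ produced by the corollary is the one controlled by Tsirelson's symmetric average, rather than the other orientation; this amounts to checking which of $\sigma_0^z,\sigma_1^z$ carries the larger prior and relating the one-sided $\Delta$ that appears in the time-limited bound to the two-sided sum that appears in the time-unlimited guessing probability. Handling this bookkeeping correctly, so that the constant $1/\sqrt2$ survives and the $\pmax^z$ offset lands exactly, is where the real care is needed.
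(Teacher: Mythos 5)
Your proposal is correct and follows essentially the same route as the paper: write $p_{\rm win}^t$ as the average of Bob's two time-limited discrimination problems, apply Corollary~\ref{cor:traceDist} to each, and then use the time-unlimited Tsirelson bound together with the identity $\Delta(\pmin^z\sigma_{\xmin}^z,\pmax^z\sigma_{\xmax}^z)=P_{\rm guess}(X^z|E^z)-\pmax^z$ to force $\tfrac{1}{2}\sum_{z}\Delta(\pmin^z\sigma_{\xmin}^z,\pmax^z\sigma_{\xmax}^z)\le 1/\sqrt{2}$. The only (harmless) difference is the bookkeeping in that last step: the paper drops the non-negative companion term from the symmetric expression $P_{\rm guess}=\tfrac{1}{2}\bigl(1+\sum_x\Delta\bigr)$ to get $1/\sqrt{2}$, whereas your use of $\pmax^z\ge\tfrac{1}{2}$ (the needed $\Delta$ is the one with $\pmax^z$ subtracted, i.e.\ the \emph{smaller} of the two, not the larger as you wrote) actually yields the slightly stronger constant $1/(2\sqrt{2})$.
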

\begin{proof}
We have from Tsirelson's bound~\cite{tsirel:original} that for any strategy of Alice and Bob no matter how much time or energy they may have available
\begin{align}
	p_{\rm win} &= \frac{1}{2}\sum_{z \in \01} P_{\rm guess}(X^z|E^z)\\
	&= \frac{1}{4}\sum_{z} \left(1 + \sum_{x \in \01} \Delta(p_{\bar{x}}^z \sigma_{\bar{x}}^z, p_x^z \sigma_x^z)\right)\\
	&\leq \frac{1}{2} + \frac{1}{2\sqrt{2}}\ .
\end{align}
Rearranging terms gives us
\begin{align}
	\frac{1}{4}\sum_{z,x\in \01} \Delta(p_{\bar{x}}^z \sigma_{\bar{x}}^z, p_x^z \sigma_x^z)
	&\leq \frac{1}{2} + \frac{1}{2\sqrt{2}} - \frac{1}{2}\\
	& = \frac{1}{2\sqrt{2}}\ .
\end{align}
Since $\Delta(\cdot,\cdot) \geq 0$, this means that
\begin{align}
	\frac{1}{2}\sum_{z} \Delta(\pmin^z \sigma_{\xmin}^z, \pmax^z \sigma_{\xmax}^z)
	\leq \frac{1}{\sqrt{2}}\ .
\end{align}
Our claim for the time limited case now follows by plugging this bound into~\eqref{eq:timeLimitedSym} 
\begin{align}
	p_{\rm win}^t &= \frac{1}{2}\sum_{z \in \01} P_{\rm guess}(X^z|E^z)_{H,t} \\
	&\leq \frac{1}{2}\left(\sum_{z \in \01} \pmax^z\right) +\\
	&
	\qquad\frac{C}{2} \sum_{z,x \in \01}
	\Delta(\pmin^z \sigma_{\xmin}^z, \pmax^z \sigma_{\xmax}^z)\ ,\nonumber\\
	&\leq \frac{1}{2}\left(\sum_{z \in \01} \pmax^z\right) + \frac{C}{\sqrt{2}}
\end{align}
where we have used the shorthand
\begin{align}
	C := \frac{t \gamma \|H\|_\infty}{\hbar}\ .
\end{align}
\end{proof}

At first glance, this bound may seem somewhat strange as it involves a potentially unknown term $\sum_x \pmax^z$.
Note, however, that this term is determined by the distributions over the states that Bob should distinguish. 
It is this distribution, that determines the classical bound for CHSH and hence
\begin{align}
	\frac{1}{2}\sum_z \pmax^z \leq \frac{3}{4}\ .
\end{align}
Note that since the ancilla is initialized to the most likely $x$ in each case, 
there exists a 
strategy for Alice and Bob
with which they can play optimally classically in no time at all. 
Yet, since there is generally an interplay between the choice of distributions and the state
Alice can create we derived the bound in its more general form above. 

\begin{corollary}\label{cor:tsirel}
	Let Bob's Hamiltonian be scaled such that $H \geq 0$.
	Then the maximum success probability of winning the CHSH game for Alice
	and Bob in time $t$ obeys
	\begin{align}
		p_{\rm win}^t  \leq \frac{3}{4} + \frac{\gamma t \|H\|_\infty}{\sqrt{2} \hbar}\ ,
	\end{align}
where
	\begin{align}
		\gamma := \left\{ \begin{array}{ll}5/\pi & \mbox{if } 1 < t E_n/\hbar < 4 \ ,\\
			3/\pi & \mbox{otherwise} \ .
		\end{array}\right.
	\end{align}
\end{corollary}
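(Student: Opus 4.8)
The plan is to obtain this corollary essentially for free from the preceding Lemma, whose bound $p_{\rm win}^t \leq \frac{1}{2}\left(\sum_{z \in \01}\pmax^z\right) + \frac{\gamma t \|H\|_\infty}{\sqrt{2}\hbar}$ already carries the correct time-dependent term and differs from the claim only in its constant. The entire problem therefore collapses to proving the single inequality $\frac{1}{2}\sum_{z \in \01}\pmax^z \leq \frac{3}{4}$, after which I simply substitute.

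To prove that inequality, I would identify $\frac{1}{2}\sum_z \pmax^z$ with the value of a particular strategy. Recall from Section~\ref{sec:tsirelNotation} that $p_{\rm win} = \frac{1}{2}\sum_z P_{\rm guess}(X^z|E^z)$, and that for each $z$ Bob faces a two-state discrimination problem with priors $\{p_x^z\}$. If Bob ignores the encoding register and on input $z$ just reports the a priori most likely symbol $\xmax^z$, he achieves exactly $P_{\rm guess}(X^z|E^z) = \pmax^z$; this is the $t=0$ limit of our protocol, consistent with the fact that the ancilla is already initialized to $\xmax^z$. Hence this strategy wins the game with probability $\frac{1}{2}\sum_z \pmax^z$.

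The key observation is that this \emph{do-nothing} strategy is a classical strategy for the CHSH game: Bob's output is a deterministic function $b = f(z)$ of his question alone, so the resulting input-output statistics factor as $\Pr[a,b|y,z] = \Pr[a|y]\,\delta_{b,f(z)}$, a local (product) distribution. The classical CHSH inequality therefore bounds its winning probability by $\frac{3}{4}$, giving $\frac{1}{2}\sum_z \pmax^z \leq \frac{3}{4}$. Plugging this into the Lemma yields the stated corollary.

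The only real obstacle is making this identification airtight: one must check that selecting $\xmax^z$ for every $z$, together with whatever marginal behaviour Alice contributes, assembles into a genuine local-realistic strategy to which the classical bound $\frac{3}{4}$ applies, as opposed to merely being an algebraic quantity that happens to coincide with the $t=0$ value of the bound. Once the factorization $\Pr[a,b|y,z] = \Pr[a|y]\,\delta_{b,f(z)}$ is in hand this is immediate, and the remaining substitution is a single line.
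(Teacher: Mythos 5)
Your proposal is correct and takes essentially the same route as the paper: the paper also obtains Corollary~\ref{cor:tsirel} by plugging the inequality $\frac{1}{2}\sum_{z\in\01} \pmax^z \leq \frac{3}{4}$ into the preceding Lemma, justifying that inequality by noting that $\frac{1}{2}\sum_z \pmax^z$ is the value of a classical (zero-time, ancilla-as-is) strategy and is therefore subject to the classical CHSH bound. Your explicit factorization $\Pr[a,b|y,z] = \Pr[a|y]\,\delta_{b,f(z)}$ simply spells out the locality argument that the paper leaves implicit.
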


\end{document}